\documentclass[11pt]{article}

\usepackage[margin=1in]{geometry}
\usepackage[utf8]{inputenc}
\usepackage[T1]{fontenc}
\usepackage{lmodern}
\usepackage{microtype}
\usepackage{amsmath,amssymb,amsthm,mathtools,bm}
\usepackage{booktabs,tabularx,array}
\usepackage{graphicx}
\usepackage{xcolor}
\usepackage[colorlinks=true,linkcolor=blue!50!black,urlcolor=blue!50!black,
            citecolor=blue!50!black]{hyperref}

\allowdisplaybreaks

\theoremstyle{plain}
\newtheorem{theorem}{Theorem}[section]

\newtheorem{proposition}[theorem]{Proposition}
\newtheorem{corollary}[theorem]{Corollary}
\theoremstyle{definition}

\theoremstyle{remark}

\DeclareMathOperator{\E}{\mathbb E}
\DeclareMathOperator{\Var}{Var}
\DeclareMathOperator{\Tr}{Tr}
\DeclareMathOperator{\Ran}{Ran}

\DeclareMathOperator{\diag}{diag}
\newcommand{\id}{\mathbf 1}
\newcommand{\R}{\mathbb R}
\newcommand{\Z}{\mathbb Z}
\newcommand{\Circ}{\mathbb S^1}
\newcommand{\hilb}{\mathcal H}
\newcommand{\cE}{\mathcal E}
\newcommand{\cF}{\mathcal F}
\newcommand{\cL}{\mathcal L}
\newcommand{\eps}{\varepsilon}          
\newcommand{\Estar}{E_{\star}}          
\newcommand{\h}{\hbar}
\newcommand{\half}{\tfrac12}
\newcommand{\wt}{\widetilde}
\newcommand{\dd}{\,\mathrm d}

\newcommand{\journalfigure}[2][]{%
  \IfFileExists{#2}{\includegraphics[#1]{#2}}{%
    \fbox{\begin{minipage}[c][2.1in][c]{0.90\linewidth}
      \centering Figure file not found:\\[2mm]
      \texttt{\detokenize{#2}}
    \end{minipage}}}}

\title{The Decoherence Exponent:\\
Stable Phase Noise and Constraints on Objective State Reduction}

\author{%
Milen V.\ Velev\\
Burgas State University ``Prof.\ Dr.\ Asen Zlatarov'', Burgas, Bulgaria\\
\texttt{milen.velev@gmail.com}
\and
Ivo D.\ Dinov\\
Statistics Online Computational Resource and\\
Michigan Center for Applied and Interdisciplinary Mathematics\\
University of Michigan, Ann Arbor, MI, USA\\
\texttt{statistics@umich.edu}}
\date{\today}

\begin{document}
\maketitle

\begin{abstract}
Let $L_u$ be a real-valued symmetric L\'evy process representing the lift of an
unobserved phase, and let a quantum system couple to it through a self-adjoint
charge $Q$. Averaging the random unitary $e^{iL_uQ}$ produces a completely
positive, trace-preserving dephasing semigroup. If the characteristic exponent
has no intrinsic charge scale in the precise sense
$\eta(\lambda\xi)=c(\lambda)\eta(\xi)$, then continuity together with the
L\'evy--Khintchine representation forces
$\eta(\xi)=D\lvert\xi\rvert^{\alpha}$ with $0<\alpha\le2$, so that coherences
between charge sectors decay at rate
$\Gamma_{ab}=D\lvert q_a-q_b\rvert^{\alpha}$. When all charge differences are
integers the construction descends to the wrapped process
$\Theta_u=L_u\bmod 2\pi$ on $\mathbb S^1$. For arbitrary real charges the lift,
not the wrapped variable, is the relevant random phase. Schoenberg's theorem
gives complete positivity for every finite real charge spectrum when
$0<\alpha\le2$, and for every $\alpha>2$ three equally spaced charges already
provide an analytic obstruction. The boundary is therefore exact, and the
finite-matrix diagonalizations reported here serve as numerical checks of it.

A Gaussian integrated phase always gives quadratic charge dependence, although
colored Gaussian noise need not produce a Markov semigroup. Two idealized
non-Gaussian mechanisms are analyzed: inverse-power Poisson shot noise, which
under stated convergence and symmetry assumptions gives $\alpha=d/p$, and
Bochner subordination of Brownian phase by an $\alpha/2$-stable clock. For a
separate bounded reduction walk we prove Born probabilities for every
nondegenerate symmetric bounded proposal law, and simulations with
symmetrically truncated stable parent laws of index $0.5,1,1.5,2$ confirm the
law-independence. In continuous time the two natural readings of an
$\alpha$-stable driver both fail to reduce. The exact-flow (Marcus) equation
oscillates for every $\alpha$, and the naive It\^o jump equation is not state
preserving. For Gaussian white noise the It\^o model collapses with Born
probabilities, whereas the exact-flow (Stratonovich/Wong--Zakai) model does not
collapse without an imposed threshold and yields non-Born threshold-exit
probabilities. The exponent is accessible through ratios of rates at different
splittings, independent of the overall scale. The quadratic member reproduces
the double-commutator small-step limit of Milburn's model, not Milburn's exact
Poissonian dynamics. A bi-temporal geometry is examined only as a possible
source of a compact phase. Closed timelike curves, nonunitary slice evolution,
an unstable nonzero-mode tower, and the absence of a derived charge assignment
prevent it from constituting a consistent field-theoretic realization.
\end{abstract}

\bigskip
\noindent\textbf{Keywords:} decoherence; stable L\'evy processes; objective
collapse; Born rule; completely positive maps; open quantum systems

\newpage

\section{Introduction}
\label{sec:intro}

Unitary Schr\"odinger evolution preserves superpositions, whereas measurements
return definite outcomes with Born-rule frequencies. Dynamical-reduction models
address this tension by adding stochastic and nonlinear state-vector dynamics.
The Ghirardi--Rimini--Weber model, continuous spontaneous localization, and
relativistic variants make that program quantitative
\cite{ghirardi1986,ghirardi1990,tumulka2006}; broad reviews include
\cite{bassi2013,carlesso2022}. Relativistic consistency and operational
no-signaling remain central constraints on any such modification
\cite{aharonov1980,aharonov1981,fleming1988,myrvold2002,gisin1990,polchinski1991}.

A logically distinct question concerns a system coupled to a phase-like degree
of freedom that is not observed. Once a probability law for that variable is
specified, averaging the corresponding random unitaries defines a reduced
channel. Irreversibility does not follow merely from calling the variable
``hidden''. Here it follows from the assumed convolution-semigroup law of its
increments. The resulting dephasing rate is controlled by the characteristic
exponent of the phase increment. Gaussian phase fluctuations give a rate
quadratic in the coupled charge difference. Colored Gaussian noise can alter
the time dependence and can make the dynamics non-Markovian, but it cannot
alter that quadratic charge dependence.

This paper investigates the result of dropping the Gaussian assumption. The primary object
is a real-valued L\'evy lift $L_u$ of a phase. The operational parameter $u$ is
laboratory evolution time throughout the phenomenology; it is not the radial
variable $t$ of the geometric discussion in
Section~\ref{sec:discussion_geometry}. If the characteristic exponent of $L_u$
is scale covariant, then it is necessarily symmetric stable,
\begin{equation}
  \E\!\left[e^{i\xi(L_u-L_0)}\right]
  =e^{-Du\lvert\xi\rvert^{\alpha}},
  \qquad 0<\alpha\le2,
  \label{eq:stable_cf_intro}
\end{equation}
and averaging $e^{iL_uQ}$ gives
\begin{equation}
  \Gamma_{ab}=D\lvert q_a-q_b\rvert^{\alpha}.
  \label{eq:master_rate}
\end{equation}
The compact circle enters by wrapping $L_u$ modulo $2\pi$. That wrapped process
fully describes the coupling only when conjugation by $e^{i\theta Q}$ is
single-valued on the circle, equivalently when all charge differences lie in
$\Z$. Arbitrary real energy charges require the lifted real phase. This
distinction is essential both mathematically and physically. The scale
covariance that selects Eq. \eqref{eq:stable_cf_intro} cannot even be stated on
the integer Fourier modes of a circle, since $\lambda k\notin\Z$ for generic
$\lambda>0$.

The Gaussian endpoint of Eq. \eqref{eq:master_rate} is the familiar
double-commutator energy-dephasing law. It agrees with the leading small-step
expansion of Milburn's exact intrinsic-decoherence equation, but not with that
exact equation at finite step size \cite{milburn1991,diosi2005}. We therefore do
not identify the stable family with Milburn's full Poisson model, nor do we
claim that every earlier intrinsic-decoherence model is Gaussian. The narrower
and rigorous statement is that a Gaussian integrated phase produces quadratic
charge dependence.

We also examine a bounded stochastic walk motivated by the fermionic reduction
models of Snoke and collaborators \cite{snoke2021,snoke2023,snoke2023entropy}.
For that discrete update, boundedness, symmetry, and nonzero conditional
variance imply almost-sure convergence to the two attractors and Born
probabilities, with no constraint on the shape of the proposal law. The theorem
does not extend to an untruncated $\alpha$-stable driver, whose jumps require a
separate stochastic-integral prescription. The two natural continuous-time
prescriptions are analyzed in section~\ref{rem:marcus} and neither reduces. In
the Gaussian diffusion limit the It\^o and exact-flow/Stratonovich readings are
inequivalent, he former is a bounded martingale that reduces, while the latter
becomes Brownian motion in a logit coordinate and has no asymptotic collapse.

Our results are organized in five layers. First, we characterize the scale-free
symmetric L\'evy exponents on $\R$ and derive the associated random-unitary
channel. Second, we distinguish the real lift from its wrapped circle process
and prove the exact complete-positivity boundary, including an explicit
three-charge obstruction above $\alpha=2$. Third, we establish the Gaussian
quadratic-charge theorem and give sufficient conditions for two non-Gaussian
mechanisms. Fourth, we prove the bounded-walk Born theorem and isolate the
stochastic-convention dependence. Fifth, we give an experimentally usable
scaling ratio together with an idealized calculation of the precision needed to
estimate the exponent. A five-dimensional geometric construction is discussed
separately in Section~\ref{sec:discussion_geometry} and is not used in any of
these results.

Throughout, three operations are kept distinct -- random-phase dephasing, which
is linear, trace preserving, and population preserving; mode filtering, which
is a trace-nonincreasing postselection operation; and nonlinear trajectory
selection, which is the only one that can produce an individual outcome.
Conflating them would turn dephasing into a claim about collapse that it cannot
support. Table~\ref{tab:three_ops} summarizes the distinction.

The paper is organized as follows. Section~\ref{sec:methods} gives the analytic
construction, the reduction theorem, the ensemble dynamics, and the numerical
protocols. Section~\ref{sec:results} reports the numerical checks and the
idealized precision calculation. Section~\ref{sec:discussion} relates the
quadratic endpoint to existing energy-dephasing models, separates established
conclusions from model assumptions, assesses the geometric proposal, and lists
falsification criteria. Technical proofs and bookkeeping appear in the
appendices.

\section{Methods}
\label{sec:methods}

\subsection{Lifted phase, wrapped phase, and the induced channel}
\label{sec:methods_channel}

For the channel statements, assume that $\hilb$ is finite dimensional. The same
formulas extend to a bounded self-adjoint $Q$ with discrete spectrum; unbounded
charges require the usual domain control and are not needed below. Let
\begin{equation}
  Q=\sum_a q_aP_a,\qquad P_aP_b=\delta_{ab}P_a,
  \qquad \sum_aP_a=\id,
  \label{eq:charge_operator}
\end{equation}
with real charges $q_a$. Let $(L_u)_{u\ge0}$ be a nontrivial, symmetric,
real-valued L\'evy process, independent of the system. Its characteristic
function can be written
\begin{equation}
  \E\!\left[e^{i\xi(L_u-L_0)}\right]=e^{-u\eta(\xi)},
  \label{eq:levy_exponent}
\end{equation}
where $\eta$ is continuous, even, conditionally negative definite,
$\eta(0)=0$, and $\eta\not\equiv0$. Symmetry makes
$e^{-u\eta(\xi)}=\E[\cos(\xi L_u)]$ real and bounded by one, so $\eta\ge0$.

The phrase ``no intrinsic charge scale'' will mean the following precise
covariance property: for every $\lambda>0$ there exists $c(\lambda)>0$ such that
\begin{equation}
  \eta(\lambda\xi)=c(\lambda)\eta(\xi)
  \quad\text{for all }\xi\in\R.
  \label{eq:scale_covariance}
\end{equation}
Rescaling the charge can then be absorbed into a rescaling of the semigroup
parameter. This condition is strictly stronger than scale invariance on the
integer Fourier modes of a circle, and is therefore imposed on the real lift.

\begin{theorem}[Scale-free symmetric L\'evy characterization]
\label{thm:scale_free}
Let $\eta$ satisfy the conditions above together with
Eq. \eqref{eq:scale_covariance}. Then there are unique constants $D>0$ and
$\alpha\in(0,2]$ such that
\begin{equation}
  \eta(\xi)=D\lvert\xi\rvert^{\alpha}.
  \label{eq:homogeneous_exponent}
\end{equation}
Conversely, every function in Eq. \eqref{eq:homogeneous_exponent} is a
symmetric L\'evy exponent and satisfies Eq. \eqref{eq:scale_covariance} with
$c(\lambda)=\lambda^{\alpha}$.
\end{theorem}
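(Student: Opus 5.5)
The plan is to first pin down the form of $\eta$ on the positive half-line from the covariance relation, and then use the L\'evy--Khintchine structure to restrict the exponent.

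\textbf{Step 1: $c$ is a continuous power.} Since $\eta\not\equiv0$ and $\eta$ is even, I fix $\xi_0>0$ with $\eta(\xi_0)>0$. Evaluating Eq.~\eqref{eq:scale_covariance} at $\xi_0$ gives
\[
c(\lambda)=\frac{\eta(\lambda\xi_0)}{\eta(\xi_0)}.
\]
This expression is continuous in $\lambda$, because $\eta$ is continuous. The same relation, applied twice, yields the multiplicativity $c(\lambda\mu)=c(\lambda)c(\mu)$. Note that $\eta(\lambda\xi_0)=c(\lambda)\eta(\xi_0)>0$ for every $\lambda>0$, so $\eta>0$ on $\R\setminus\{0\}$ and $\log c$ is well defined.

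The function $\log c(e^s)$ is then a continuous additive function of $s$, so it is linear. Hence $c(\lambda)=\lambda^{\alpha}$ for some real $\alpha$.

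Setting $\xi=1$ gives $\eta(\lambda)=\lambda^\alpha\eta(1)$. By evenness,
\[
\eta(\xi)=D\lvert\xi\rvert^\alpha,\qquad D=\eta(1)>0.
\]

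\textbf{Step 2: the range $0<\alpha\le2$.} The lower bound is immediate. If $\alpha<0$, then $\eta$ would blow up at $0$, contradicting continuity and $\eta(0)=0$. If $\alpha=0$, then $\eta\equiv D$ away from $0$ while $\eta(0)=0$, which again contradicts continuity.

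The upper bound is the main obstacle. I would use the L\'evy--Khintchine representation of a symmetric exponent,
\[
\eta(\xi)=\tfrac12\sigma^2\xi^2+\int_{\R}\bigl(1-\cos(\xi x)\bigr)\,\nu(\dd x).
\]
Combined with $1-\cos y\le y^2/2$ and $1-\cos y\le2$, this gives a growth bound
\[
\eta(\xi)\le C(1+\xi^2)
\]
for a constant $C$ depending on $\sigma$ and $\nu$. The key input is that $\int\min(1,x^2)\,\nu(\dd x)<\infty$. The bound rules out $D\lvert\xi\rvert^\alpha$ with $\alpha>2$ as $\xi\to\infty$.

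As a cross-check, I could instead use the small-$\xi$ end. Conditional negative definiteness gives $\eta(\xi)\ge c'\xi^2$ near $0$ whenever $\sigma>0$ or $\nu\neq0$. This holds via $1-\cos y\ge c''y^2$ for $\lvert y\rvert\le1$ and Fatou's lemma, and it contradicts $\lvert\xi\rvert^\alpha=o(\xi^2)$ for $\alpha>2$. Either route suffices.

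Uniqueness of $(D,\alpha)$ follows from evaluating $\eta$ at $\xi=1$ and $\xi=2$.

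\textbf{Step 3: the converse.} For $\alpha=2$ the function $D\xi^2$ is the Brownian exponent. For $0<\alpha<2$ I would exhibit the L\'evy measure
\[
\nu(\dd x)=C_\alpha\lvert x\rvert^{-1-\alpha}\,\dd x .
\]
This measure integrates $\min(1,x^2)$ precisely because $0<\alpha<2$. A substitution $y=\xi x$ then gives
\[
\int\bigl(1-\cos(\xi x)\bigr)\,\nu(\dd x)=\lvert\xi\rvert^\alpha\, C_\alpha\int\bigl(1-\cos y\bigr)\lvert y\rvert^{-1-\alpha}\,\dd y .
\]
The last integral is finite and positive, so $C_\alpha$ can be chosen to produce any $D>0$. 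The scaling property $c(\lambda)=\lambda^\alpha$ is then immediate.
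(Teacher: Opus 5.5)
Your proof is correct and follows the paper's argument. You normalize $c(\lambda)$ at a point where $\eta>0$, solve the multiplicative Cauchy equation to get $c(\lambda)=\lambda^{\alpha}$, use continuity at the origin for $\alpha>0$ and the L\'evy--Khintchine representation for $\alpha\le2$; the difference is that you spell out what the paper leaves to citation, namely the growth bound $\eta(\xi)\le C(1+\xi^2)$ ruling out $\alpha>2$ and the explicit L\'evy measure $C_\alpha\lvert x\rvert^{-1-\alpha}\dd x$ for the converse.
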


\begin{proof}
First, $\eta(1)>0$. Indeed $\eta\ge0$, and if $\eta(1)=0$ then
Eq. \eqref{eq:scale_covariance} gives $\eta(\lambda)=c(\lambda)\eta(1)=0$ for
every $\lambda>0$, so $\eta\equiv0$ by evenness and $\eta(0)=0$, contradicting
the hypothesis. Setting $\xi=1$ in Eq. \eqref{eq:scale_covariance} therefore
gives $c(\lambda)=\eta(\lambda)/\eta(1)$, which is positive and continuous.
Applying Eq. \eqref{eq:scale_covariance} twice yields
$c(\lambda\mu)=c(\lambda)c(\mu)$, and a positive continuous solution of the
multiplicative Cauchy equation is $c(\lambda)=\lambda^{\alpha}$ for a unique
real $\alpha$. Evenness and homogeneity of degree $\alpha$ then give
$\eta(\xi)=D\lvert\xi\rvert^{\alpha}$ with $D=\eta(1)>0$. Continuity at the
origin with $\eta(0)=0$ forces $\alpha>0$, and the L\'evy--Khintchine
characterization of continuous conditionally negative definite functions on
$\R$ restricts the degree of homogeneity to $\alpha\le2$. Conversely,
$D\lvert\xi\rvert^{\alpha}$ is the exponent of a symmetric stable process for
precisely this range \cite{sato1999}.
\end{proof}

The lifted random-unitary channel is
\begin{equation}
  \cE_u(\rho)
  =\E\!\left[e^{iL_uQ}\rho e^{-iL_uQ}\right].
  \label{eq:lifted_channel}
\end{equation}
It is a convex average of unitary conjugations, hence completely positive,
trace preserving and unital; stationarity and independence of the increments of
$L$ make $(\cE_u)_{u\ge0}$ a semigroup. Its charge blocks obey
\begin{equation}
  P_a\cE_u(\rho)P_b
  =e^{-Du\lvert q_a-q_b\rvert^{\alpha}}P_a\rho P_b.
  \label{eq:dephasing_blocks}
\end{equation}
Thus, populations are fixed and only coherences decay, giving
Eq. \eqref{eq:master_rate}.

Now define the wrapped process $\Theta_u=L_u\bmod 2\pi$. A circle-valued phase
acts consistently by conjugation when
\begin{equation}
  q_a-q_b\in\Z\quad\text{for all }a,b,
  \label{eq:charge_lattice}
\end{equation}
because shifting $\theta$ by $2\pi$ then changes $e^{i\theta Q}$ by at most a
common phase, which cancels in conjugation. Equivalently, the spectrum of $Q$
lies in a single affine lattice $\chi+\Z$. When Eq. \eqref{eq:charge_lattice}
fails, Eq. \eqref{eq:lifted_channel}, rather than an integral over a circle, is
the well-defined construction.

\begin{theorem}[Wrapped symmetric stable semigroup]
\label{thm:wrapped_stable}
For $0<\alpha\le2$ and $D>0$, the wrapped process has the transition density
\begin{equation}
  p_u(\theta)=\frac1{2\pi}\sum_{k\in\Z}
  e^{-Du\lvert k\rvert^{\alpha}}e^{ik\theta},
  \qquad u>0,
  \label{eq:transition_density}
\end{equation}
which defines a strongly continuous Markov convolution semigroup on $\Circ$.
Its generator is the closure of $-D(-\Delta_\theta)^{\alpha/2}$, its unique
invariant probability measure is normalized Haar measure, and
$p_u\to(2\pi)^{-1}$ in $L^2$ and weakly as $u\to\infty$.
\end{theorem}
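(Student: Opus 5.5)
The plan is to read everything off the Fourier coefficients of the wrapped law, using Theorem~\ref{thm:scale_free} only through the explicit characteristic function $\E[e^{i\xi L_u}]=e^{-Du\lvert\xi\rvert^{\alpha}}$.

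\emph{Step 1: the density.} Fix $u>0$. The law $\mu_u$ of $\Theta_u=L_u\bmod 2\pi$ (taking $L_0=0$) is the pushforward of the law of $L_u$ under the quotient map $\R\to\Circ$. Its Fourier coefficients are
\[
\hat\mu_u(k)=\E\bigl[e^{-ik\Theta_u}\bigr]=\E\bigl[e^{-ikL_u}\bigr]=e^{-Du\lvert k\rvert^{\alpha}},
\]
because $e^{-ik\theta}$ is $2\pi$-periodic for $k\in\Z$. Since $\alpha>0$ and $Du>0$, these coefficients are summable. Hence the series in Eq.~\eqref{eq:transition_density} converges absolutely and uniformly to a continuous function $p_u$ with the same Fourier coefficients. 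Uniqueness of measures on $\Circ$ with given Fourier coefficients then gives $\mu_u(d\theta)=p_u(\theta)\,d\theta$. Positivity and unit mass of $p_u$ come for free, since it is the density of a probability measure.

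\emph{Step 2: semigroup structure.} Independence and stationarity of the increments of $L$ pass to $\Theta$, because the quotient map is a homomorphism. This gives the Chapman--Kolmogorov identity $p_u*p_v=p_{u+v}$, which can also be checked coefficientwise: $e^{-Du|k|^\alpha}e^{-Dv|k|^\alpha}=e^{-D(u+v)|k|^\alpha}$. Thus $T_uf=p_u*f$ is a Markov (positive, unital) convolution semigroup, and $\Theta$ is a Markov process on $\Circ$.

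For strong continuity on $C(\Circ)$ and on $L^2$, I will use the following:
\begin{itemize}
\item on trigonometric polynomials, $T_u e_k=e^{-Du|k|^\alpha}e_k\to e_k$ uniformly as $u\to0$;
\item trigonometric polynomials are dense in both spaces;
\item $\|T_u\|\le1$ on both spaces.
\end{itemize}
An $\varepsilon/3$ argument then finishes the proof. Equivalently, $\mu_u\to\delta_0$ weakly, because the coefficients converge to $1$.

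\emph{Step 3: generator.} On the core of trigonometric polynomials the generator acts diagonally: $\frac{d}{du}\big|_{0}T_ue_k=-D|k|^\alpha e_k$. This is exactly $-D(-\Delta_\theta)^{\alpha/2}$ defined spectrally, since $-\Delta_\theta e_k=k^2e_k$. Trigonometric polynomials are invariant under $T_u$ and dense, so by the core theorem (Nelson/Engel--Nagel) they form a core. Therefore the generator is the closure of $-D(-\Delta_\theta)^{\alpha/2}$ restricted to them. On $L^2$ that closure is the self-adjoint Fourier multiplier with domain $\{f:\sum|k|^{2\alpha}|\hat f(k)|^2<\infty\}$.

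I expect \textbf{this step to be the main obstacle}. The difficulty lies in fixing which function space the statement refers to, and in verifying the core property there. On $C(\Circ)$ I will rely on the same invariant-dense-core argument rather than on an explicit description of the domain.

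\emph{Step 4: invariance and convergence.} A probability measure $\nu$ on $\Circ$ is invariant iff $\hat\nu(k)e^{-Du|k|^\alpha}=\hat\nu(k)$ for all $k$ and all $u>0$. For $k\neq0$ the factor $e^{-Du|k|^\alpha}$ is strictly less than $1$, which forces $\hat\nu(k)=0$. Together with $\hat\nu(0)=1$, this shows $\nu$ is normalized Haar measure, and Haar measure is clearly invariant.

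For $u\ge1$, Parseval gives
\[
\|p_u-(2\pi)^{-1}\|_{L^2}^2=\frac1{2\pi}\sum_{k\ne0}e^{-2Du|k|^\alpha}\le e^{-2D(u-1)}\,\frac1{2\pi}\sum_{k\ne0}e^{-2D|k|^\alpha}\to0,
\]
using $|k|^\alpha\ge1$. Convergence in $L^2$ on a probability space then implies $\int fp_u\to(2\pi)^{-1}\int f$ for every $f\in C(\Circ)$, which is the weak convergence.
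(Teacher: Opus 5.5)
Your proposal is correct and follows the paper's appendix proof essentially step for step. Both arguments identify the Fourier coefficients $e^{-Du\lvert k\rvert^{\alpha}}$, use their multiplicativity for the semigroup property, read off the generator on the Fourier basis, force the nonzero coefficients of any invariant measure to vanish, and use exponential decay of the nonzero modes for convergence. The one real difference is that you obtain $\mu_u$ directly as the pushforward of the law of $L_u$, whereas the paper goes through positive definiteness and Bochner's theorem on $\Circ$. Your route ties the density explicitly to the wrapped process $\Theta_u$, and you also supply the core-theorem and $\varepsilon/3$ details that the paper leaves implicit.
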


A proof is given in Appendix~\ref{app:levy}. At $\alpha=1$ the Fourier series
sums to the Poisson kernel
\begin{equation}
  p_u(\theta)=\frac{1}{2\pi}
  \frac{1-r_u^2}{1-2r_u\cos\theta+r_u^2},
  \qquad r_u=e^{-Du},
  \label{eq:poisson_kernel}
\end{equation}
which is the wrapped Cauchy density \cite{mardia2000}. Wrapping removes any
literal tail, since a circle-valued variable has bounded support and hence
finite circular moments of every order. What survives is the spectral scaling
$\lvert k\rvert^{\alpha}$ of the Fourier modes, and it is that scaling, not any
tail, which is observable through Eq. \eqref{eq:master_rate}.

\begin{theorem}[Universal complete-positivity boundary]
\label{thm:cp_boundary}
For every finite set of real charges $\{q_a\}$ and every $u\ge0$, the Schur
multiplier
\begin{equation}
  C^{(u)}_{ab}=e^{-Du\lvert q_a-q_b\rvert^{\alpha}}
  \label{eq:schur_matrix}
\end{equation}
is positive semidefinite when $0<\alpha\le2$. For every $\alpha>2$ the
universal statement fails: the three charges $0,1,2$ give a non-positive
multiplier for all sufficiently small positive $u$.
\end{theorem}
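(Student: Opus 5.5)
For $0<\alpha\le2$ the plan is to give two independent arguments and let them check each other. The first is probabilistic and reuses Theorem~\ref{thm:scale_free}: $D\lvert\xi\rvert^{\alpha}$ is the exponent of a symmetric stable process $L_u$. Hence $C^{(u)}_{ab}=\E[e^{iL_u(q_a-q_b)}]$. For any complex vector $(z_a)$ this gives
\begin{equation*}
  \sum_{a,b}\bar z_a C^{(u)}_{ab} z_b=\E\Bigl[\Bigl\lvert\sum_a z_a e^{-iL_uq_a}\Bigr\rvert^2\Bigr]\ge0 .
\end{equation*}
So $C^{(u)}$ is positive semidefinite for every finite real charge set and every $u\ge0$. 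The case $u=0$ is the all-ones matrix and is trivially positive semidefinite.

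The second argument follows the Schoenberg route named in the abstract. $\psi(x)=\lvert x\rvert^{\alpha}$ is conditionally negative definite on $\R$ for $0<\alpha\le2$, either by the L\'evy--Khintchine representation or by the integral formula $\lvert x\rvert^{\alpha}=c_\alpha\int_0^\infty(1-\cos(xs))s^{-1-\alpha}\dd s$ for $\alpha<2$, with $\alpha=2$ handled directly. Schoenberg's theorem then makes $e^{-t\psi}$ positive definite for all $t\ge0$. The Schur-multiplier reading follows because $P_a\cE_u(\rho)P_b=C^{(u)}_{ab}P_a\rho P_b$, and the Schur product of positive semidefinite matrices is positive semidefinite.

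For $\alpha>2$ I will take charges $0,1,2$ and write $\epsilon=Du$ and $\beta=2^{\alpha}$. The multiplier is
\begin{equation*}
  C=\begin{pmatrix}1&e^{-\epsilon}&e^{-\epsilon\beta}\\ e^{-\epsilon}&1&e^{-\epsilon}\\ e^{-\epsilon\beta}&e^{-\epsilon}&1\end{pmatrix}.
\end{equation*}
I will test it on $v=(1,-2,1)$, the second-difference vector that annihilates constants and linear functions. This gives
\begin{equation*}
  v^{\top}Cv=6-8e^{-\epsilon}+2e^{-\epsilon\beta}.
\end{equation*}
Expanding to first order, the constant terms cancel ($6-8+2=0$). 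The linear term is $(8-2\beta)\epsilon=(8-2^{\alpha+1})\epsilon$, which is strictly negative exactly when $\alpha>2$. Hence $v^{\top}Cv<0$ for all sufficiently small $\epsilon>0$, which means for all sufficiently small $u>0$.

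The main obstacle is making "sufficiently small" rigorous. I would bound the remainder with $\lvert e^{-x}-1+x\rvert\le x^2/2$ for $x\ge0$, which gives $v^{\top}Cv\le(8-2\beta)\epsilon+(4+\beta^2)\epsilon^2$, an explicit threshold. Equivalently, $\sum_{a,b}v_av_b\lvert q_a-q_b\rvert^{\alpha}=2(2^{\alpha}-4)>0$ with $\sum v_a=0$, so $\lvert x\rvert^{\alpha}$ fails conditional negative definiteness. This is the infinitesimal form of the failure. At $\alpha=2$ the linear coefficient vanishes, which is consistent with positivity holding there, so the boundary is exact.
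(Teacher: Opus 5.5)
Your proof is correct and follows the paper's route. For $0<\alpha\le2$ you apply Schoenberg's theorem to the conditionally negative definite kernel $\lvert x-y\rvert^{\alpha}$, and for $\alpha>2$ you test charges $0,1,2$ against $v=(1,-2,1)$ to get $v^{\mathsf T}C^{(u)}v=(8-2^{\alpha+1})Du+O(u^2)$. Your characteristic-function positivity check and your explicit remainder bound go slightly beyond the paper: the first repeats its random-unitary construction, and the second makes ``sufficiently small $u$'' quantitative where the paper only writes $O(u^2)$.
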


\begin{proof}
For $0<\alpha\le2$, the kernel $\lvert x-y\rvert^{\alpha}$ is conditionally
negative definite on $\R$. Schoenberg's theorem therefore makes its negative
exponential positive definite, so Eq. \eqref{eq:schur_matrix} is positive
semidefinite for every finite charge set \cite{schoenberg1938}. For
$\alpha>2$, take $q=(0,1,2)$ and $v=(1,-2,1)$, for which $\sum_av_a=0$ and
\begin{equation}
  \sum_{a,b}v_av_b\lvert q_a-q_b\rvert^{\alpha}
  =2\bigl(2^{\alpha}-4\bigr)>0.
  \label{eq:three_charge_form}
\end{equation}
Since $v^{\mathsf T}\mathbf 1\mathbf 1^{\mathsf T}v=0$, expansion of
$C^{(u)}$ at $u=0$ gives
$v^{\mathsf T}C^{(u)}v=-Du\,2(2^{\alpha}-4)+O(u^2)<0$ for sufficiently small
$u$. Hence the proposed universal semigroup is not completely positive above
two.
\end{proof}

The boundary is exact and the matrix diagonalizations reported in
Section~\ref{sec:results_kernels} validate the implementation rather than
locate the boundary. Two consequences deserve emphasis. First, three charges
are the minimum needed. A two-level multiplier
$\bigl(\begin{smallmatrix}1&c\\c&1\end{smallmatrix}\bigr)$ with $c\in(0,1]$ is
positive semidefinite for every exponent, so a single two-level experiment
cannot by itself exhibit the obstruction. What fails above $\alpha=2$ is the
claimed scale-free characteristic function and its validity for arbitrary
charge spectra, not the positivity of every particular finite matrix. Second,
Theorem~\ref{thm:cp_boundary} asserts failure for sufficiently small $u$ at
three charges, whereas a broad charge set makes the failure visible over a wide
range of $u$; the analytic and numerical statements are therefore complementary
rather than redundant.

Phase dephasing should also be distinguished from a filter. If $P_k$ grades an
excitation level, then
\begin{equation}
  \cF_u(\rho)=M_u\rho M_u^\dagger,
  \qquad M_u=\sum_{k\ge0}e^{-\kappa k^{\beta}u}P_k,
  \label{eq:mode_filter}
\end{equation}
is completely positive and trace non-increasing. Provided
$\Tr(P_0\rho)>0$, its normalized output converges in trace norm to
$P_0\rho P_0/\Tr(P_0\rho)$. It is not a projector at finite $u$, and neither
$\cE_u$ nor $\cF_u$ selects an individual outcome.

\begin{table}[htbp]
\centering
\caption{Three distinct operations. Only nonlinear trajectory selection can
produce a definite individual outcome.}
\label{tab:three_ops}
\vspace{1mm}
\small
\begin{tabularx}{\textwidth}{>{\raggedright\arraybackslash}p{0.19\textwidth} X X X}
\toprule
& Phase dephasing $\cE_u$ & Mode filter $\cF_u$ & Nonlinear selection \\
\midrule
Type & CPTP and unital & CP, trace non-increasing & nonlinear on trajectories \\
Effect on populations & none & suppresses selected sectors & drives to attractors \\
Idempotent limit & no & yes, as $u\to\infty$ & not applicable \\
Produces outcomes & no & no & yes \\
Status here & derived & mathematical comparison & separately modeled \\
\bottomrule
\end{tabularx}
\end{table}

\subsection{Charge dependence from Gaussian and non-Gaussian phase models}
\label{sec:methods_exponent}

The linear coupling $e^{iL_uQ}$ is an assumption about how the unobserved phase
acts on the system. It is not a consequence of a Dirac rather than a
Klein--Gordon field equation, and none of the channel results selects fermions.
Fermionic occupation projectors enter only in the separate reduction model of
Section~\ref{sec:methods_reduction}.

A common microscopic parametrization writes the relative accumulated phase as
\begin{equation}
  \delta\phi_{ab}(u)=(q_a-q_b)X(u),
  \qquad X(u)=\int_0^u h(s)\dd s,
  \label{eq:integrated_phase}
\end{equation}
where $h$ is a centered stochastic field for which the integral is defined.
The charge dependence and the time dependence must be kept separate.

\begin{theorem}[Gaussian integrated phases are quadratic in charge]
\label{thm:gaussian_nogo}
Let $h$ be a centered Gaussian process with covariance
$C_h(s,t)=\E[h(s)h(t)]$, and let $X(u)$ be the linear functional in
Eq. \eqref{eq:integrated_phase}. Then
\begin{equation}
  \E\!\left[e^{i(q_a-q_b)X(u)}\right]
  =\exp\!\left[-\half(q_a-q_b)^2\Var X(u)\right],
  \qquad
  \Var X(u)=\int_0^u\!\int_0^u C_h(s,t)\dd s\dd t.
  \label{eq:gaussian_cf}
\end{equation}
Consequently Gaussianity fixes the charge exponent to two. The evolution is the
$\alpha=2$ semigroup of Eq. \eqref{eq:master_rate} only when
$\Var X(u)=2Du$. Otherwise its charge dependence is still quadratic but its
time dependence can be non-Markovian.
\end{theorem}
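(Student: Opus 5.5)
The plan has two parts. First, I show that $X(u)$ is a centered Gaussian random variable and compute its variance. Then I read off the charge and time dependence from the Gaussian characteristic function.

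\textbf{Gaussianity of $X(u)$.} I would assume what Eq.~\eqref{eq:integrated_phase} implicitly requires: $h$ is jointly measurable with $\int_0^u\!\int_0^u \lvert C_h(s,t)\rvert\dd s\dd t<\infty$, so that $X(u)$ is defined as an $L^2(\Omega)$ limit. I would approximate the integral by Riemann sums $X_n=\sum_j h(s_j)\Delta s_j$. Each $X_n$ is a finite linear combination of jointly Gaussian centered variables, hence centered Gaussian with variance $\sum_{j,k}C_h(s_j,s_k)\Delta s_j\Delta s_k$. The Gaussian family is closed under $L^2$ limits. So $X(u)$ is centered Gaussian, and its variance is the $L^2$ limit of these variances. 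By Fubini,
\begin{equation*}
  \E[X(u)^2]=\int_0^u\!\int_0^u \E[h(s)h(t)]\dd s\dd t .
\end{equation*}
When $h$ is only a generalized process, for instance white noise, I would instead define $X(u)$ as the Wiener-type integral $h(\mathbf 1_{[0,u]})$ and read the double integral in the distributional sense. The same closure argument applies.

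\textbf{Characteristic function and its consequences.} For a centered Gaussian $X$ with variance $\sigma^2$ one has $\E[e^{i\xi X}]=e^{-\xi^2\sigma^2/2}$. Setting $\xi=q_a-q_b$ gives Eq.~\eqref{eq:gaussian_cf}. The charge exponent is therefore exactly two for any covariance. Comparing with Eq.~\eqref{eq:dephasing_blocks} at $\alpha=2$, the coherence factor $e^{-\frac12(q_a-q_b)^2\Var X(u)}$ equals $e^{-Du(q_a-q_b)^2}$ for all charges if and only if $\Var X(u)=2Du$.

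\textbf{Failure of the semigroup property otherwise.} If $\Var X(u)$ is not linear in $u$, the family of maps $\rho\mapsto\E[e^{iX(u)Q}\rho e^{-iX(u)Q}]$ is still CPTP for each $u$. However, it satisfies $\cE_{u+v}=\cE_u\cE_v$ only if $V(u+v)=V(u)+V(v)$, where $V(u)=\Var X(u)$. I would check this on a single coherence block with $q_a\neq q_b$. For measurable $V$, additivity forces $V$ to be linear. Hence no Markov semigroup arises when $V$ is nonlinear. A concrete example is Ornstein--Uhlenbeck $h$, which gives $V(u)\sim u^2$ at small $u$.

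The main obstacle is the first step: justifying that the integral of a Gaussian field is Gaussian with the stated variance under minimal hypotheses. I would handle it by the $L^2$ closure argument and state the integrability assumption explicitly.
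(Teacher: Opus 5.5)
Your proposal is correct and follows the paper's route. The paper's proof is your two steps in compressed form: it asserts in one line that a linear functional of a Gaussian process is centered Gaussian, then substitutes $\xi=q_a-q_b$ into $e^{-\xi^2v/2}$. Your $L^2$-closure argument and your additivity argument ($V(u+v)=V(u)+V(v)$ forcing $V(u)=2Du$) make explicit what the paper leaves implicit for the ``only when'' and non-semigroup clauses.

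The one thing to tighten is the approximation scheme. Under your stated hypothesis (joint measurability and $\int_0^u\!\int_0^u\lvert C_h\rvert\dd s\dd t<\infty$), Riemann sums need not converge; take $h(s)=Z\,\mathbf 1_{\mathbb Q}(s)$ with $Z$ standard normal. Also, the pathwise Fubini exchange for $\E[X(u)^2]$ needs $\int_0^u C_h(s,s)^{1/2}\dd s<\infty$. So either assume mean-square continuity of $h$, or define $X(u)$ as the weak integral $\E[X(u)Y]=\int_0^u\E[h(s)Y]\dd s$ in the closed Gaussian span of $\{h(s)\}$. That weak integral exists and has variance $\int_0^u\!\int_0^u C_h\dd s\dd t$ under exactly your hypothesis.
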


\begin{proof}
Every linear functional of a Gaussian process is Gaussian. The characteristic
function of a centered Gaussian variable of variance $v$ is
$e^{-\xi^2v/2}$. Substitute $\xi=q_a-q_b$. No assumption about the convergence
or truncation of a cumulant series is required.
\end{proof}

A Lorentzian feature in a frequency-domain spectrum should therefore not be
confused with a Cauchy increment distribution. The former is a property of
$C_h$ or of the spectral density.
The latter is a property of the full one-time
characteristic function. The two are logically independent, and Gaussianity
fixes the second at exponent two whatever the first.

An idealized inverse-power Poisson field supplies a non-Gaussian example under
explicit hypotheses. Let $\{X_j\}$ be a homogeneous Poisson point process of
intensity $n$ in $\R^d$, and let $\{A_j\}$ be independent symmetric marks,
independent of the points. Formally set
\begin{equation}
  Y=\sum_j A_j\lvert X_j\rvert^{-p}.
  \label{eq:poisson_field}
\end{equation}
The random field is understood as the symmetric Poisson stochastic integral
determined by its characteristic functional, equivalently as a distributional
limit of annular sums under symmetric truncation. This qualification matters
when the far-field series is not absolutely convergent.

\begin{proposition}[Ideal inverse-power Poisson field]
\label{prop:poisson_stable}
Assume $p>d/2$, $\Pr(A_1\ne0)>0$, and
$\E\lvert A_1\rvert^{d/p}<\infty$. Then the infinite-volume characteristic
functional of Eq. \eqref{eq:poisson_field} is symmetric stable,
\begin{equation}
  \E[e^{i\xi Y}]=e^{-C\lvert\xi\rvert^{\alpha}},
  \qquad \alpha=\frac{d}{p}\in(0,2),
  \label{eq:alpha_dp}
\end{equation}
where
\begin{equation}
  C=\frac{n\lvert\mathbb S^{d-1}\rvert}{p}
    \E\lvert A_1\rvert^{\alpha}
    \int_0^\infty (1-\cos s)s^{-1-\alpha}\dd s>0.
  \label{eq:poisson_constant}
\end{equation}
\end{proposition}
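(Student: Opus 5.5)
The plan is to compute the characteristic functional of the truncated field with Campbell's formula for marked Poisson point processes, and then remove the truncation. Fix $0<\epsilon<R<\infty$ and put $Y_{\epsilon,R}=\sum_{\epsilon<|X_j|<R}A_j|X_j|^{-p}$. This is a finite sum almost surely. Let $\mu$ be the law of $A_1$. Campbell's formula together with the symmetry of $\mu$ gives
\begin{equation*}
  \E[e^{i\xi Y_{\epsilon,R}}]
  =\exp\!\Bigl[-n\int_{\epsilon<|x|<R}\E\bigl(1-\cos(\xi A_1|x|^{-p})\bigr)\dd x\Bigr].
\end{equation*}
Symmetry matters at exactly this point. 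It kills the sine term, so no centering constants appear, and this is what makes the symmetric annular truncation in the definition of $Y$ legitimate. Passing to polar coordinates gives $n|\mathbb S^{d-1}|\int_\epsilon^R \E(1-\cos(\xi A_1 r^{-p}))r^{d-1}\dd r$.

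Next I substitute $s=|\xi A_1|r^{-p}$ on the event $A_1\neq0$; the event $A_1=0$ contributes nothing. Then $r=(|\xi A_1|/s)^{1/p}$ and $r^{d-1}\dd r=\tfrac1p|\xi A_1|^{d/p}s^{-1-d/p}\dd s$ in absolute value. Setting $\alpha=d/p$, the integrand becomes
\begin{equation*}
  \frac{1}{p}\,|\xi|^{\alpha}|A_1|^{\alpha}(1-\cos s)s^{-1-\alpha},
\end{equation*}
with $s$ ranging over $(|\xi A_1|R^{-p},\,|\xi A_1|\epsilon^{-p})$. As $\epsilon\to0$ and $R\to\infty$ this interval increases to $(0,\infty)$ pointwise in $A_1$.

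The key integrability step is to show that $I_\alpha=\int_0^\infty(1-\cos s)s^{-1-\alpha}\dd s$ is finite and positive exactly when $0<\alpha<2$.
\begin{itemize}
  \item Near $0$ we have $1-\cos s\sim s^2/2$, so the integrand behaves like $s^{1-\alpha}$. This is integrable precisely when $\alpha<2$, which is the hypothesis $p>d/2$.
  \item Near $\infty$ the integrand is bounded by $2s^{-1-\alpha}$, which is integrable since $\alpha>0$.
  \item Positivity holds because the integrand is nonnegative and not identically zero.
\end{itemize}
Monotone convergence in $(\epsilon,R)$ and then in the expectation over $A_1$ shows that the exponent converges to $C|\xi|^\alpha$. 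Here $\E|A_1|^\alpha<\infty$ keeps the limit finite, and $\Pr(A_1\neq0)>0$ makes $C>0$. The constant $C$ is exactly Eq.~\eqref{eq:poisson_constant}.

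It remains to identify the limit with the law of $Y$. The characteristic functions converge pointwise to $e^{-C|\xi|^\alpha}$, which is continuous at $0$. Lévy's continuity theorem then gives convergence in distribution of the symmetric annular sums to a symmetric $\alpha$-stable variable, matching the definition of $Y$ as that distributional limit. The Poisson stochastic-integral reading gives the same answer, since its Lévy measure is the image measure just computed. Stability of the limit can also be read off directly from $e^{-C|\xi|^\alpha}$ via Theorem~\ref{thm:scale_free}.

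I expect the main obstacle to be the order of limits and the justification of the symmetric truncation, not the computation itself. For $\alpha\ge1$ the far field is not absolutely convergent: $\sum|A_j||X_j|^{-p}$ diverges almost surely when $p\le d$. The argument therefore has to stay at the level of characteristic functions. The cosine integrand is nonnegative, so monotone convergence applies without any absolute-summability assumption. This is the one place where a careless use of linearity of expectation on the raw series would fail.
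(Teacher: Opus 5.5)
Your proposal is correct and follows essentially the same route as the paper: the Poisson exponential formula on a finite annulus with symmetric marks, the substitution $s=\lvert\xi A_1\rvert r^{-p}$, Tonelli/monotone convergence using only $\E\lvert A_1\rvert^{d/p}<\infty$, finiteness of $\int_0^\infty(1-\cos s)s^{-1-\alpha}\dd s$ exactly for $0<\alpha<2$, and L\'evy's continuity theorem. Your argument relies only on the nonnegativity of $1-\cos$. That is slightly cleaner than the paper's side remark on far-field behaviour, which assumes a finite second mark moment not among the hypotheses.
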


The proof is given in Appendix~\ref{app:influence}. Three points about the
hypotheses are worth stating explicitly, because they determine how much
physical freedom the mechanism actually has.

First, the convergence condition $p>d/2$ comes from the \emph{far} field, where
$1-\cos(\xi Ar^{-p})\simeq\half\xi^2A^2r^{-2p}$ and the radial integrand behaves
as $r^{d-1-2p}$. It is not an extra assumption: $p>d/2$ is equivalent to
$\alpha=d/p<2$, so the stable window is exactly the convergence window. At the
opposite end, $p\le d/2$ makes the centered sum square integrable and the
central limit theorem returns a Gaussian limit, that is, the boundary value
$\alpha=2$.

Second, no small-$r$ regularization is needed for the characteristic function
to exist. The integrand $1-\cos(\cdot)$ is bounded by two and $r^{d-1}$ is
integrable at the origin, so Eq. \eqref{eq:poisson_log_cf} converges there
without a cutoff, even though ordinary moments of the singular field diverge. A
hard core does regularize those moments, but it also removes the exact stable
tail, so a finite system or a regularized source yields at most an approximate
stable regime.

Third, the exponent is fixed by the falloff of the microscopic coupling rather
than freely adjustable. In $d=3$ the window admits a force-type coupling with
$p=2$, giving the Holtsmark value $\alpha=3/2$
\cite{holtsmark1919,chandrasekhar1942,chandrasekhar1943}, and it admits
$d=p=3$, giving a Cauchy field.
This last case is the classical mechanism
behind Lorentzian pressure-broadened spectral lines, in which a perturber
shifts the transition by an amount falling as the inverse cube of the impact
parameter \cite{chandrasekhar1943}. A Coulomb potential-type coupling with
$p=1$ lies outside the window, since $\alpha=3$, and the infinite-volume
characteristic function does not converge.

A second mechanism is subordination. Let $B_s$ be Brownian phase with
\begin{equation}
  \E[e^{i\xi B_s}]=e^{-\kappa\xi^2s},
  \label{eq:brownian_phase}
\end{equation}
and let $S_u$ be an independent $\beta$-stable subordinator with
\begin{equation}
  \E[e^{-\lambda S_u}]=e^{-cu\lambda^{\beta}},
  \qquad 0<\beta\le1.
  \label{eq:subordinator}
\end{equation}
Then
\begin{equation}
  \E[e^{i\xi B_{S_u}}]
  =e^{-c\kappa^{\beta}u\lvert\xi\rvert^{2\beta}}
  =e^{-Du\lvert\xi\rvert^{\alpha}},
  \qquad \alpha=2\beta,
  \quad D=c\kappa^{\beta}.
  \label{eq:subordination_cf}
\end{equation}
The Cauchy process is obtained from $\beta=1/2$, and the full stable range from
$\beta=\alpha/2$. The randomness is transferred from the Gaussian phase to the
operational clock rather than inserted as a non-Gaussian phase amplitude.

Neither mechanism fixes the dimensional rate $D$ for a laboratory system.
Therefore, equation \eqref{eq:master_rate} should be treated as a two-parameter
phenomenology, with $D$ and $\alpha$ to be constrained or estimated. Under the
energy assignment $q_a=E_a/\Estar$ with a reference energy $\Estar$, the
measurable coefficient is $\Lambda_\alpha=D/\Estar^{\alpha}$, with units
$\mathrm{s}^{-1}\mathrm{energy}^{-\alpha}$.

\subsection{A bounded reduction walk and the Born rule}
\label{sec:methods_reduction}

In the norm-conserving reduction models discussed by Snoke and collaborators
\cite{snoke2023,snoke2023entropy},
a state-dependent anti-Hermitian term is written
\begin{equation}
  \widehat V
  =\sum_a i\h\,\omega_{R,a}
     \bigl(\langle\widehat N_a\rangle-\widehat N_a\bigr),
  \label{eq:snoke_V}
\end{equation}
where $\widehat N_a$ is an occupation projector,
$\widehat N_a^2=\widehat N_a$, and $\omega_{R,a}$ fluctuates with the
environment \cite{snoke2021,snoke2023,snoke2023entropy}. Relative to one mode,
a state can be decomposed as
\begin{equation}
  a\lvert\psi_0\rangle\lvert0\rangle
  +b\lvert\psi_1\rangle\lvert1\rangle,
  \qquad \langle\psi_0\vert\psi_1\rangle=0,
  \label{eq:mode_decomposition}
\end{equation}
with $U_3(0)=2\lvert b\rvert^2-1$. On the invariant plane $U_1=0$, the formal
Bloch equations used in that model reduce to
\begin{equation}
  \dot U_2=-\omega_RU_3(1-U_3^2)^{1/2},
  \qquad
  \dot U_3=\omega_RU_2(1-U_3^2)^{1/2},
  \label{eq:bloch_reduction}
\end{equation}
with fixed points $U_3=\pm1$.

The analytic result used for the simulations is most cleanly stated directly
for the bounded discrete update
\begin{equation}
  U_{n+1}=U_n+\eps\varphi_{n+1}(1-U_n^2).
  \label{eq:bounded_walk}
\end{equation}
Let $\mathcal F_n=\sigma(U_0,\varphi_1,\ldots,\varphi_n)$ be the natural
filtration of the driver.

\begin{proposition}[Born rule for a bounded symmetric walk]
\label{prop:martingale}
Suppose $U_0\in[-1,1]$, the variables $\varphi_n$ are independent and
identically distributed, symmetric, bounded by $\lvert\varphi_n\rvert\le M$,
and have variance $v_\varphi>0$. If $0<\eps M\le1/2$, then
Eq. \eqref{eq:bounded_walk} remains in $[-1,1]$, is a bounded martingale, and
converges almost surely to $U_\infty\in\{-1,+1\}$. Moreover
\begin{equation}
  \Pr(U_\infty=+1)=\frac{1+U_0}{2}.
  \label{eq:born_walk}
\end{equation}
\end{proposition}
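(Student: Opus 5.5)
The plan has four steps: check that the walk stays in $[-1,1]$, show it is a martingale, show the limit can only be $\pm1$, and then read off the Born probability from optional stopping.

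\textbf{Invariance of $[-1,1]$.} Write $U_n=x\in[-1,1]$ and set $s=\eps\varphi_{n+1}$, so that $\lvert s\rvert\le\eps M\le 1/2$. Then
\begin{equation*}
1-U_{n+1}=(1-x)\bigl(1-s(1+x)\bigr),
\qquad
1+U_{n+1}=(1+x)\bigl(1+s(1-x)\bigr).
\end{equation*}
Since $0\le 1\pm x\le2$, we have $\lvert s(1\pm x)\rvert\le 2\eps M\le1$. Both factors in parentheses are therefore nonnegative, and $U_{n+1}\in[-1,1]$. By induction the whole path stays in $[-1,1]$.

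\textbf{Martingale property.} The variable $\varphi_{n+1}$ is independent of $\mathcal F_n$, symmetric, and bounded, so it has mean zero. Hence
\begin{equation*}
\E[U_{n+1}\mid\mathcal F_n]=U_n+\eps(1-U_n^2)\E\varphi_{n+1}=U_n .
\end{equation*}
Thus $U_n$ is a martingale bounded by one. The martingale convergence theorem gives almost-sure and $L^2$ convergence to some limit $U_\infty$.

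\textbf{Identifying the limit; this is the main step.} I will use the quadratic variation. Orthogonality of martingale increments gives
\begin{equation*}
\E U_N^2-\E U_0^2=\sum_{n<N}\E(U_{n+1}-U_n)^2=\eps^2v_\varphi\sum_{n<N}\E(1-U_n^2)^2 .
\end{equation*}
The left side is at most one. Hence $\sum_n\E(1-U_n^2)^2<\infty$, and by monotone convergence $\sum_n(1-U_n^2)^2<\infty$ almost surely. It follows that $(1-U_n^2)\to0$ almost surely. Since $U_n\to U_\infty$, continuity gives $1-U_\infty^2=0$, so $U_\infty\in\{-1,+1\}$ almost surely.

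This argument uses $v_\varphi>0$ essentially. It does not need the shape of the law of $\varphi$, nor any lower bound on individual steps. No hitting-time argument is required, because the boundary points are fixed points that are in general only approached, never reached.

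\textbf{Born probability.} Bounded convergence gives $\E U_\infty=\lim_n\E U_n=U_0$ (conditioning on $U_0$ if it is random). Since $U_\infty$ takes only the values $\pm1$,
\begin{equation*}
\E U_\infty=2\Pr(U_\infty=+1)-1 .
\end{equation*}
Setting this equal to $U_0$ yields Eq.~\eqref{eq:born_walk}.
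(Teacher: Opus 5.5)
Your proof is correct and follows essentially the same route as the paper: state preservation, the martingale property from symmetry, the second-moment identity $\E U_N^2-\E U_0^2=\eps^2v_\varphi\sum_{n<N}\E(1-U_n^2)^2$ forcing $1-U_n^2\to0$ almost surely, and $\E U_\infty=U_0$ by bounded convergence. The only differences are cosmetic: you prove invariance of $[-1,1]$ by factoring $1\mp U_{n+1}$ rather than through the paper's monotonicity of $f(U)=U+\delta(1-U^2)$, and you pass from summable expectations to almost-sure convergence by monotone convergence rather than by Markov's inequality plus Borel--Cantelli.
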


\begin{proof}
Write $\delta=\eps\varphi_{n+1}\in[-1/2,1/2]$ and
$f(U)=U+\delta(1-U^2)$. Then $f'(U)=1-2\delta U\ge0$ on $[-1,1]$ and
$f(\pm1)=\pm1$, so $f$ maps $[-1,1]$ into itself and the update is state
preserving. Symmetry and independence give
$\E[U_{n+1}\mid\mathcal F_n]=U_n$, so $(U_n)$ is a bounded, hence uniformly
integrable, martingale, and it converges almost surely and in $L^1$ to a limit
$U_\infty$. In addition
\begin{equation}
  \E[U_{n+1}^2\mid\mathcal F_n]
  =U_n^2+\eps^2v_\varphi(1-U_n^2)^2 .
  \label{eq:second_moment_recursion}
\end{equation}
Taking expectations and telescoping, and using $U_n^2\le1$, gives
$\sum_n\E[(1-U_n^2)^2]\le(1-U_0^2)/(\eps^2v_\varphi)<\infty$. Markov's
inequality and the Borel--Cantelli lemma then imply $1-U_n^2\to0$ almost
surely, so $U_\infty=\pm1$. Finally $\E[U_\infty]=U_0$ by $L^1$ convergence,
which yields Eq. \eqref{eq:born_walk}.
\end{proof}

\begin{corollary}
\label{cor:alpha_free}
For any family of bounded symmetric proposal laws satisfying the hypotheses of
Proposition~\ref{prop:martingale}, in particular the symmetrically truncated
stable laws used in Section~\ref{sec:results_born}, the branch probability is
the same function of $U_0$ and is independent of the parent stability index
$\alpha$. Born-rule recovery in this model class therefore places no constraint
on the exponent.
\end{corollary}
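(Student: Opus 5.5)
The corollary follows from Proposition~\ref{prop:martingale}, and the plan is to check its hypotheses for each member of the family and then observe that the conclusion does not involve the proposal law. Fix a parent symmetric $\alpha$-stable law with $0<\alpha\le2$ and a truncation level $M>0$. Let the proposal law be the parent conditioned on $[-M,M]$, or more generally any symmetric truncation of it. I would verify the hypotheses in three steps. First, symmetry holds because both the parent density and the truncation window are invariant under $\varphi\mapsto-\varphi$. Second, boundedness $\lvert\varphi_n\rvert\le M$ holds by construction. Third, the variance is strictly positive. For this I would note that the stable parent has a density that is positive on all of $\R$, so the truncated law is not the point mass at zero, and hence $v_\varphi=\E[\varphi_1^2]>0$. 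Taking the $\varphi_n$ i.i.d.\ from this law and choosing $\eps$ with $0<\eps M\le1/2$ then places us exactly in the setting of Proposition~\ref{prop:martingale}.

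Applying the proposition, the walk \eqref{eq:bounded_walk} converges almost surely to $U_\infty\in\{-1,+1\}$ and satisfies $\Pr(U_\infty=+1)=(1+U_0)/2$. The key observation is that the right-hand side of \eqref{eq:born_walk} contains only $U_0$. It does not depend on $M$, $\eps$, $v_\varphi$, or any other feature of the proposal law, and in particular not on the parent index $\alpha$. The reason is that the proof of the proposition uses the law only to obtain the martingale property, which needs symmetry, and to obtain convergence to the endpoints, which needs $v_\varphi>0$. The value of the limiting probability comes solely from $\E[U_\infty]=U_0$. Hence any two members of the family, for example those with indices $0.5,1,1.5,2$, give the same branch-probability function $U_0\mapsto(1+U_0)/2$. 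This yields the final sentence of the corollary: recovering the Born rule cannot discriminate between values of $\alpha$.

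The only step that needs any care is the nondegeneracy condition $v_\varphi>0$. If a truncation were allowed to concentrate all of its mass at $0$, the walk would stay frozen at $U_0$, and both almost-sure collapse and the Born conclusion would fail. I would therefore state explicitly that symmetric truncation of a law with full support, here the stable density (including the Gaussian at $\alpha=2$), to a window $[-M,M]$ with $M>0$ always keeps positive mass away from the origin. A second point is that the corollary concerns only the truncated laws. The untruncated stable driver violates the boundedness hypothesis, so this argument does not apply to it, which is consistent with the paper's separate treatment of continuous-time stable drivers.
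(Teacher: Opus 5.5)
Your proposal is correct and follows the paper's reasoning. The corollary is an immediate consequence of Proposition~\ref{prop:martingale}: once symmetry, boundedness, the i.i.d.\ property, $0<\eps M\le1/2$, and $v_\varphi>0$ are checked, the conclusion $\Pr(U_\infty=+1)=(1+U_0)/2$ involves only $U_0$ and hence not $\alpha$. The paper gives no separate proof beyond noting that symmetric truncation preserves symmetry and hence the martingale property. Your explicit nondegeneracy check, using that the symmetric stable density is positive on all of $\R$, is a small but worthwhile addition that the paper leaves implicit.
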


Corollary~\ref{cor:alpha_free} matters for the logic of the paper. If the
exponent were fixed by the requirement of Born-rule recovery, it would not be
an empirical quantity. Because it is not fixed, it becomes a measurement
target. Numerical studies of this model class report the same branch statistics
under Gaussian and under Lorentzian driving \cite{snoke2023entropy}, which is
what Proposition~\ref{prop:martingale} predicts.

The hypotheses of Proposition~\ref{prop:martingale} are important. After
truncation the proposal law is not itself stable, and the truncation cannot be
removed simply by the boundedness of the state. The factor $(1-U_n^2)$ in
Eq. \eqref{eq:bounded_walk} is $\mathcal F_n$-measurable and multiplies
$\varphi_{n+1}$, so it confers no integrability on the driver:
$\E\lvert\Delta U_n\rvert=\eps(1-U_n^2)\E\lvert\varphi\rvert$, which is
infinite whenever $\E\lvert\varphi\rvert$ is. An untruncated stable variable
with $\alpha\le1$ has no finite first absolute moment, so the conditional
expectation invoked above is undefined rather than merely unproved; and for
$1<\alpha<2$ the mean exists but the update is not state preserving because the
proposal is unbounded.

\subsubsection{Continuous-time stable drivers}\label{rem:marcus}
The two natural continuous-time prescriptions can be examined directly, and
neither reproduces Proposition~\ref{prop:martingale}.

Under the exact-flow, or Marcus canonical, equation the flow in the logit
coordinate $z=\operatorname{arctanh}U$ is the translation $z\mapsto z+\Delta L$,
so that $U_u=\tanh(z_0+L_u)$. A nondegenerate symmetric stable process on $\R$
oscillates, with $\limsup_{u\to\infty}L_u=+\infty$ and
$\liminf_{u\to\infty}L_u=-\infty$ almost surely, for every $\alpha\in(0,2]$.
Hence $U_u$ has no almost-sure limit and there is no reduction at any stability
index. This extends Proposition~\ref{prop:convention}(ii) from the Gaussian
case to the entire stable family.

Under the naive It\^o jump equation
$\dd U_u=\sigma(1-U_{u-}^2)\dd L_u$, the unbounded jumps carry the state
outside $[-1,1]$, so the equation is not state preserving; and for
$\alpha\le1$ the increment is not integrable, so the martingale property is
undefined. A well-posed continuous-time jump model therefore requires a
compensated, state-preserving jump measure, of which
Eq. \eqref{eq:bounded_walk} is the discrete-time representative.

The stochastic-calculus convention can be isolated without invoking stable
jumps at all, by considering Gaussian white noise.

\begin{proposition}[Gaussian convention dependence]
\label{prop:convention}
Let $U_0\in(-1,1)$.
\begin{enumerate}
\item[(i)] The It\^o diffusion
\begin{equation}
  \dd U_u=\sigma(1-U_u^2)\dd W_u
  \label{eq:ito_U}
\end{equation}
is a bounded martingale that converges to $\pm1$ and satisfies
$\Pr(U_\infty=+1)=(1+U_0)/2$.
\item[(ii)] The exact-flow or Wong--Zakai white-noise limit is the
Stratonovich equation
\begin{equation}
  \dd U_u=\sigma(1-U_u^2)\circ\dd W_u.
  \label{eq:strat_U}
\end{equation}
With $z=\operatorname{arctanh}U$ it becomes $\dd z=\sigma\dd W$, so it has no
almost-sure limit at $\pm1$. If reduction is declared on first reaching
$\lvert z\rvert=L$, equivalently $\lvert U\rvert=\tanh L$, then
\begin{equation}
  \Pr(\text{exit at }+L)
  =\half+\frac{\operatorname{arctanh}U_0}{2L},
  \label{eq:threshold_exit}
\end{equation}
which is not the Born rule and tends to $1/2$ as the threshold is moved toward
the formal attractors. Here $L$ is an explicit infrared cutoff on the logit
coordinate, that is, a declared distance from the attractors at which
absorption is deemed to have occurred.
\end{enumerate}
\end{proposition}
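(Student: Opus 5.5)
For part (i) I will follow the pattern of the proof of Proposition~\ref{prop:martingale}, with stochastic calculus in place of the discrete recursion. The coefficient $b(U)=\sigma(1-U^2)$ is globally Lipschitz on $[-1,1]$; extend it by zero outside. Then Eq.~\eqref{eq:ito_U} has a unique strong solution. The constant processes $\pm1$ are also solutions, so uniqueness shows that a path started in $(-1,1)$ never leaves $[-1,1]$. The solution is therefore a bounded continuous local martingale, hence a true martingale, and it converges almost surely and in $L^1$ to some $U_\infty$.

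To identify $U_\infty$ as $\pm1$, I will use the quadratic variation: $\E\int_0^\infty\sigma^2(1-U_s^2)^2\dd s=\E[U_\infty^2]-U_0^2\le1$. This is the continuous-time analogue of Eq.~\eqref{eq:second_moment_recursion}. It shows that $\int_0^\infty(1-U_s^2)^2\dd s<\infty$ almost surely. Because $U_s\to U_\infty$, the integrand tends to $(1-U_\infty^2)^2$, and the integral can be finite only if $1-U_\infty^2=0$. Hence $U_\infty\in\{-1,+1\}$. Finally, $\E U_\infty=U_0$ gives $\Pr(U_\infty=+1)=(1+U_0)/2$.

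For part (ii), I will not try to prove the Wong--Zakai theorem; I will cite it. Mollified white noise converges to the Stratonovich solution, and the exact flow of the vector field $(1-U^2)\partial_U$ driven by the increment $\Delta W$ is the Marcus/Stratonovich flow, in agreement with Section~\ref{rem:marcus}. Since $\operatorname{arctanh}'(U)=(1-U^2)^{-1}$, the Stratonovich chain rule applies with no second-order term. This gives $\dd z=\sigma\dd W$, so $z_u=z_0+\sigma W_u$ and $U_u=\tanh(z_0+\sigma W_u)$. As in the stable case, recurrence of Brownian motion ($\limsup W=+\infty$, $\liminf W=-\infty$) shows that $U_u$ has no almost-sure limit.

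For the threshold exit, the stopping time $\tau=\inf\{u:\lvert z_u\rvert=L\}$ is finite almost surely, and $z_{u\wedge\tau}$ is a bounded martingale. Optional stopping gives $z_0=L(2P_+-1)$, where $P_+$ is the probability of exit at $+L$. Solving, $P_+=\half+z_0/(2L)$ with $z_0=\operatorname{arctanh}U_0$. For fixed $U_0$ this tends to $1/2$ as $L\to\infty$, and $P_+\ne(1+U_0)/2$ in general, so the exit law is not the Born rule.

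The main obstacle is the rigor of part (i): proving that the boundary is not crossed and handling the unbounded time horizon. I will address it by the Lipschitz-extension and uniqueness argument, and if needed by stopping at the exit times of $[-1+\delta,1-\delta]$. The identification of the exact flow in part (ii) is a modelling statement supported by citation rather than a new estimate.
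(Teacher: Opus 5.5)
Your proof is correct and takes essentially the paper's approach. For part (i) you use the paper's ``equivalently'' route: bounded-martingale convergence, with an interior limit excluded because the quadratic-variation rate $\sigma^2(1-U^2)^2$ would stay bounded away from zero. You make this quantitative through $\E\int_0^\infty\sigma^2(1-U_s^2)^2\dd s\le1$ and add a well-posedness argument the paper omits; the paper's primary route is instead the scale function $s(z)=\tanh z$ with exit probability $(U_0+\tanh L)/(2\tanh L)$. Part (ii) matches the paper exactly: the Stratonovich chain rule gives $\dd z=\sigma\dd W$, Brownian recurrence rules out a limit, and the exit probability $(z_0+L)/(2L)$ is the Brownian gambler's-ruin formula, which you obtain by optional stopping.
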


The proof, and the It\^o drift form of Eq. \eqref{eq:strat_U}, are given in
Appendix~\ref{app:convention}. The convention cannot be selected merely by
observing that the noise is external. It must follow from the microscopic
limiting procedure.
Ideal non-anticipating white-noise forcing leads naturally
to an It\^o model, whereas smooth colored approximations can lead to a
Stratonovich/Wong--Zakai limit. For jump noise, the analogous exact-flow
convention is Marcus rather than Stratonovich, as in section \ref{rem:marcus}. A
reduction model in this class must therefore state which convention it adopts,
because the two give quantitatively different and experimentally
distinguishable predictions.

\subsection{Ensemble linearity of a diffusive selection model}
\label{sec:methods_ensemble}

A standard normalized diffusive stochastic Schr\"odinger equation with bounded,
self-adjoint operators $N_a$ is
\begin{equation}
  \dd\lvert\Psi\rangle
  =-\frac{i}{\h}H\lvert\Psi\rangle\dd u
   +\sum_a\left[
     \gamma_a(N_a-\langle N_a\rangle)\dd W_u^a
     -\frac{\gamma_a^2}{2}(N_a-\langle N_a\rangle)^2\dd u
   \right]\lvert\Psi\rangle,
  \label{eq:collapse_sse}
\end{equation}
where $\dd W_u^a\dd W_u^b=\delta^{ab}\dd u$. That last identity is a property
of Wiener quadratic variation and does not hold for a general driver. It\^o's
product rule gives the closed ensemble equation
\begin{equation}
  \frac{\dd\rho}{\dd u}
  =-\frac{i}{\h}[H,\rho]
   -\half\sum_a\gamma_a^2[N_a,[N_a,\rho]],
  \qquad
  \rho=\E\lvert\Psi\rangle\langle\Psi\rvert.
  \label{eq:collapse_lindblad}
\end{equation}
The cancellation is derived in Appendix~\ref{app:nosignaling}. It uses the
Wiener quadratic variation for this particular diffusive unraveling, so a naive
replacement of $\dd W$ by an $\alpha$-stable increment does not yield a valid
stochastic Schr\"odinger equation. This does not imply that non-Gaussian or
jump unravelings are impossible. Poisson-driven quantum-jump unravelings, for
instance, have exactly linear Lindblad ensembles; such models simply require
their own L\'evy--It\^o compensators and must be checked separately for norm
preservation and ensemble linearity.

Linearity of Eq. \eqref{eq:collapse_lindblad} is necessary but not by itself
sufficient for operational no-signaling. In a bipartite application we also
assume that controllable operations are local and that any self-adjoint
dephasing operators are local or additive sums
$A_r\otimes\id+\id\otimes B_r$. Under these assumptions the reduced generator
on either party is closed and independent of the other party's local setting, see
Appendix~\ref{app:nosignaling} for the partial-trace calculation. A genuinely
nonlocal Hamiltonian, or a product Lindblad operator such as $A\otimes B$, can
mediate interactions and lies outside that argument.

The order in which the limit and the average are taken is not a formality. The
infinitesimal generator is defined through the conditional limit
\begin{equation}
  \cL(\Pi_u)
  =\lim_{\Delta u\downarrow0}
    \frac{\E[\Pi_{u+\Delta u}-\Pi_u\mid\mathcal F_u]}{\Delta u},
  \qquad \Pi_u=\lvert\Psi_u\rangle\langle\Psi_u\rvert,
  \label{eq:conditional_generator}
\end{equation}
and the quadratic-variation term must be included before the conditional
expectation is evaluated. Averaging over increments at finite step size, before
the limit is taken, discards precisely the quadratic-variation contribution
that cancels the state-dependent terms in Eq. \eqref{eq:ito_cancellation}. The
resulting finite-difference object is not the It\^o generator. It retains
residual $\langle N_a\rangle\rho$ terms in the marginal, and those terms present
themselves as signaling. Apparent signaling reported for models of this class
\cite{snoke2023entropy} is attributable to that order of operations. With the
limit taken first, Eq. \eqref{eq:collapse_lindblad} is exactly linear and, under
the locality assumptions above, the marginal is closed.

Finally, Eq. \eqref{eq:dephasing_blocks} preserves the expectation of every
observable $A$ commuting with $Q$
\begin{equation}
  \Tr[A\cE_u(\rho)]=\Tr(A\rho),\qquad [A,Q]=0.
  \label{eq:charge_diagonal_conservation}
\end{equation}
Energy conservation for the dephasing channel is therefore conditional, it
follows when $[H,Q]=0$, and in particular under the energy assignment
$Q=f(H)$. It is not a claim for an arbitrary charge operator.

The corresponding statement for the selection model is different and weaker.
Using cyclicity, the generator in Eq. \eqref{eq:collapse_lindblad} satisfies
\begin{equation}
  \frac{\dd}{\dd u}\Tr(H\rho)
  =-\half\sum_a\gamma_a^2\Tr\bigl([H,N_a][N_a,\rho]\bigr),
  \label{eq:selection_energy}
\end{equation}
which vanishes for all $\rho$ when $[H,N_a]=0$ for every $a$, and in general
does not vanish otherwise. A selection channel whose operators fail to commute
with the Hamiltonian injects energy, as continuous spontaneous localization
does. Exact energy conservation is thus a property of the dephasing channel
under $[H,Q]=0$, not of every model discussed here, and any comparison with
heating searches must specify which of the two is meant.

\subsection{Numerical protocols}
\label{sec:methods_numerics}

All numerical values reported below are generated by the supplementary
computational notebook with fixed random seeds. The analytic theorems do not
depend on these computations.

Transition kernels are evaluated from Eq. \eqref{eq:transition_density} with
Fourier modes $\lvert k\rvert\le4000$, far beyond convergence for the values of
$Du$ used. They are compared with the Poisson kernel at $\alpha=1$ and with
$4\times10^5$ lifted symmetric stable variates wrapped modulo $2\pi$. The
lifted scale is $(Du)^{1/\alpha}$, so that the characteristic function agrees
with Eq. \eqref{eq:stable_cf_intro} at integer arguments. The
complete-positivity implementation is checked by diagonalizing
Eq. \eqref{eq:schur_matrix} for the $64$ equally spaced charges
$q=0,\ldots,63$ and for $30$ real charges sampled uniformly from $[-5,5]$, on
a grid in $\alpha$ with spacing $5\times10^{-4}$ near two, at several values
of $Du$.

Born-rule trials use Eq. \eqref{eq:bounded_walk} with $\eps=0.35$ and a
symmetric stable parent law of unit scale conditioned on
$\lvert\varphi\rvert\le1$, so that $\eps M=0.35<1/2$ and the update remains in
$[-1,1]$ as required by Proposition~\ref{prop:martingale}. Symmetric truncation
preserves symmetry, and hence the martingale property, exactly. For each
$U_0\in\{-0.8,-0.4,0,0.4,0.8\}$ and $\alpha\in\{0.5,1,1.5,2\}$,
$2\times10^4$ independent trajectories are run until first entry into
$\lvert U\rvert\ge1-10^{-8}$, or for at most $6\times10^4$ steps. No trajectory
remained censored at this cap in the reported runs. The threshold changes
stopping times but changes the ideal symmetric exit probability only by
$O(10^{-8})$. The exact-flow control uses the coordinate
$z=\operatorname{arctanh}U$, the update $z_{n+1}=z_n+\eps\varphi_{n+1}$, a
truncated Gaussian proposal, $2\times10^4$ trajectories per initial value, and
a fixed horizon of $4000$ steps. Its reported quantity is the terminal branch
fraction $\Pr(U_{4000}>0)$, not an absorption probability.

For the subordination check, $5\times10^5$ samples are generated as
$X=\sigma\sqrt{S}\,Z$, where $Z$ is standard normal and $S$ is a positive
$1/2$-stable variable parameterized by
$\E[e^{-\lambda S}]=e^{-u\sqrt\lambda}$. Hence,
\begin{equation}
  \E[e^{i\xi X}]=e^{-(\sigma/\sqrt2)u\lvert\xi\rvert},
  \label{eq:subordination_target}
\end{equation}
and the real empirical characteristic function, that is the sample mean of
$\cos(\xi X)$, is compared directly with this expression.

The idealized exponent-recovery study uses splittings proportional to
$\{1,2\}$, $\{1,2,4\}$, and $\{1,2,5,10\}$. Synthetic visibilities are
\begin{equation}
  V_i=\exp[-A x_i^{\alpha}]+\mathcal N(0,\sigma_V^2),
  \qquad A=-\log(0.95),
  \label{eq:precision_model}
\end{equation}
so that the smallest splitting loses five percent of its visibility. Each
replicate is fitted by constrained nonlinear least squares on the raw
visibilities, with $A>0$ and $0.05\le\alpha\le3$. Fitting the raw visibilities
rather than a doubly logarithmic transform is essential here: with additive
readout noise the realized $V_i$ can leave $(0,1)$, at which point
$\log(-\log V_i)$ is undefined, and discarding or censoring the affected
replicates biases the estimator badly in exactly the configurations with the
longest lever arm. The standard deviation of $\widehat\alpha$ is estimated from
$4000$ replicates using random seed 20260818. The calculation assumes known
initial visibility, independent homoscedastic readout noise, and no nuisance
environmental channel; it is a design benchmark rather than a forecast for a
specific apparatus.

\section{Results}
\label{sec:results}

\subsection{Wrapped kernels and a numerical check of complete positivity}
\label{sec:results_kernels}

Direct evaluation of Eq. \eqref{eq:transition_density} agrees with the closed
Poisson-kernel form, Eq. \eqref{eq:poisson_kernel}, to maximum absolute errors
$6.2\times10^{-14}$ at $Du=0.1$ and $8.3\times10^{-17}$ at $Du=2$. These are
floating-point truncation errors rather than statistical discrepancies.

The wrapping operation was checked independently by drawing $4\times10^5$
lifted symmetric stable variates. For circular moments $k=1,\ldots,5$ the
maximum deviations from $e^{-Dk^{\alpha}u}$ were $1.8\times10^{-3}$ at
$\alpha=0.7$, $1.2\times10^{-3}$ at $\alpha=1$, $1.4\times10^{-3}$ at
$\alpha=1.5$, and $8.9\times10^{-4}$ at $\alpha=2$, against a representative
Monte Carlo standard error of $1.6\times10^{-3}$. Every deviation lies within
$1.2$ standard errors. Figure~\ref{fig:kernels}(a) displays the kernels; as
$\alpha$ decreases the density develops the sharp central peak and broad
shoulders characteristic of heavy-tailed increments while remaining an ordinary
bounded density on the circle, which is the concrete form of the remark
following Eq. \eqref{eq:poisson_kernel}.

\begin{figure}[htbp]
\centering
\journalfigure[width=\textwidth]{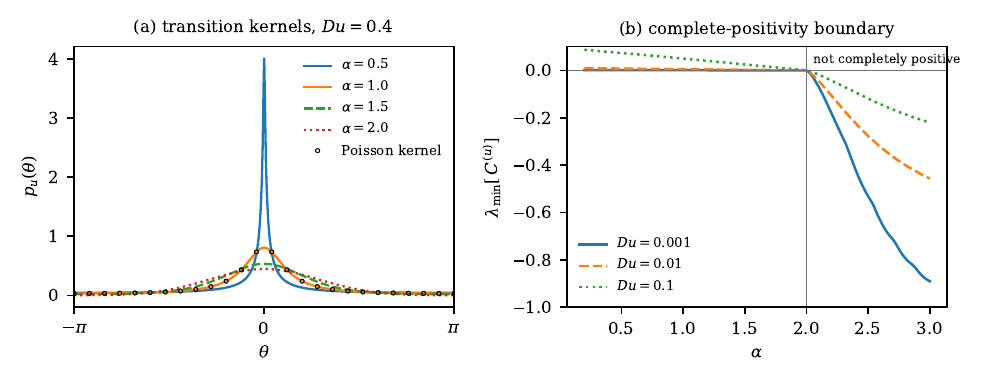}
\caption{(a) Wrapped stable transition densities at $Du=0.4$ for
$\alpha=0.5,1,1.5,2$, computed from Eq. \eqref{eq:transition_density}.
Open circles show the Poisson kernel of Eq. \eqref{eq:poisson_kernel}, which
coincides with the $\alpha=1$ curve to machine precision.
(b) Floating-point smallest eigenvalue of the Schur matrix
$C^{(u)}_{ab}=e^{-Du\lvert q_a-q_b\rvert^{\alpha}}$ for $q=0,\ldots,63$. The
exact universal boundary is proved analytically in
Theorem~\ref{thm:cp_boundary}; the scan shows how negativity appears in this
finite charge set above two. Near $\alpha=2$ and small $Du$ the matrix is
severely ill-conditioned, and eigenvalues at the level of machine precision
should be read as numerically, not exactly, zero.}
\label{fig:kernels}
\end{figure}

For $q=0,\ldots,63$ and $Du=10^{-3}$, the computed smallest eigenvalue at
$\alpha=2$ is $-7\times10^{-15}$, which is roundoff in an exactly positive
Gaussian kernel. With a grid spacing of $5\times10^{-4}$, the first negative
value occurs at $\alpha=2.0005$ for $Du=10^{-3},10^{-2},10^{-1}$ alike, and at
$\alpha=3$ the smallest eigenvalue is $-0.89$ for $Du=10^{-3}$. The same
behavior holds for generic real charges: for thirty values drawn uniformly from
$[-5,5]$ at $Du=1$, the smallest eigenvalue is positive at $\alpha=1$,
indistinguishable from zero at $\alpha=2$, and negative at $\alpha=2.05$.
Figure~\ref{fig:kernels}(b) displays the transition. These values validate the
code path across a range of $u$ that the small-$u$ expansion in
Theorem~\ref{thm:cp_boundary} does not itself cover; they do not improve the
exact analytic boundary.

Therefore, a globally scale-free rate proportional to $\lvert\Delta q\rvert^{\alpha}$ with
$\alpha>2$ cannot be the characteristic exponent of a symmetric
random phase on $\R$, and cannot define a completely positive trace-preserving
semigroup for arbitrary charge spectra. This statement is narrower than the
assertion that every effective two-level fit with a local slope above two is
nonphysical. Finite-range crossovers, and more general phase distributions that
are not pure power laws, are not excluded by Theorem~\ref{thm:cp_boundary}.

\subsection{Born statistics for the bounded walk}
\label{sec:results_born}

Table~\ref{tab:born} reports the first-threshold branch fraction for the
bounded update of Eq. \eqref{eq:bounded_walk}. For each parent index and each
initial value, $2\times10^4$ independent trajectories were simulated. All
twenty branch fractions agree with the Born prediction $\half(1+U_0)$ within
the expected Monte Carlo scatter. The largest absolute deviation is $0.0092$,
the largest standardized deviation is $2.8$, which is unremarkable across
twenty comparisons, and the fitted slopes and intercepts are close to $1/2$, as
required by Proposition~\ref{prop:martingale}. The Cauchy and Gaussian cases
are statistically indistinguishable from each other and from the two
intermediate indices. Figure~\ref{fig:born}(a) displays the collapse onto the
Born line. The labels $\alpha=0.5,1,1.5,2$ refer to the untruncated parent
distributions; the simulated proposals are their bounded conditional laws.

\begin{table}[htbp]
\centering
\caption{Born-rule recovery for four bounded symmetric proposal laws. The first
four rows report the fraction of $2\times10^4$ trajectories first entering the
upper threshold $U\ge1-10^{-8}$; the Born prediction is $\half(1+U_0)$. Monte
Carlo standard errors are at most $0.0035$. The final row is a different
quantity: it reports the terminal fraction $\Pr(U_{4000}>0)$ under the
exact-flow Gaussian control and is not an absorption probability. Here
$\widehat a$ and $\widehat b$ are the slope and intercept of a fit against
$U_0$.}
\label{tab:born}
\vspace{1mm}
\begingroup
\scriptsize
\setlength{\tabcolsep}{3pt}
\begin{tabular}{@{}lccccccccc@{}}
\toprule
& \multicolumn{5}{c}{$U_0$} & & & & \\
\cmidrule(lr){2-6}
Proposal law & $-0.8$ & $-0.4$ & $0$ & $0.4$ & $0.8$
 & $\max\lvert\mathrm{dev}\rvert$ & $\widehat a$ & $\widehat b$ & $R^2$\\
\midrule
Born prediction & 0.100 & 0.300 & 0.500 & 0.700 & 0.900
  & --- & 0.5 & 0.5 & 1\\
\midrule
parent $\alpha=0.5$          & 0.1027 & 0.2989 & 0.5044 & 0.6908 & 0.8999 & 0.0092 & 0.4966 & 0.4993 & 0.99977\\
parent $\alpha=1$ (Cauchy)   & 0.1007 & 0.3040 & 0.5017 & 0.6990 & 0.9023 & 0.0040 & 0.4996 & 0.5015 & 0.99997\\
parent $\alpha=1.5$          & 0.0977 & 0.3003 & 0.5022 & 0.7004 & 0.8994 & 0.0023 & 0.5009 & 0.5000 & 0.99998\\
parent $\alpha=2$ (Gaussian) & 0.0961 & 0.2973 & 0.5020 & 0.6994 & 0.8970 & 0.0039 & 0.5010 & 0.4984 & 0.99995\\
\midrule
Exact-flow control at step 4000
 & 0.4614 & 0.4844 & 0.4994 & 0.5124 & 0.5387 & 0.3614 & 0.0456 & 0.4993 & 0.98577\\
\bottomrule
\end{tabular}
\endgroup
\end{table}

The exact-flow control behaves entirely differently, and the contrast is the
substantive content of this section. Its branch fraction varies only between
$0.4614$ and $0.5387$ as $U_0$ ranges over $[-0.8,0.8]$, against the required
range $[0.1,0.9]$; the fitted slope is $0.046$ rather than $0.5$, a factor of
eleven too small. This is the numerical signature of the drift toward the
equator identified in Proposition~\ref{prop:convention}(ii), and the agreement
with the analytic prediction is quantitative. With a standard normal proposal
truncated to $[-1,1]$ the proposal variance is $0.291$, so the accumulated $z$
walk has standard deviation
\begin{equation}
  s_{4000}=0.35\sqrt{0.291\times4000}=11.9 .
  \label{eq:accumulated_sd}
\end{equation}
Since $U=\tanh z$, the central-limit normal approximation for the accumulated
truncated-Gaussian increments predicts
\begin{equation}
  \Pr(U_{4000}>0)
  \simeq\Phi\!\left(\frac{\operatorname{arctanh}U_0}{11.9}\right),
  \label{eq:finite_horizon_control}
\end{equation}
giving $0.4634$, $0.4859$, $0.5000$, $0.5141$, and $0.5366$ against the
observed $0.4614$, $0.4844$, $0.4994$, $0.5124$, and $0.5387$. Every difference
is below $0.6$ Monte Carlo standard errors. Equation \eqref{eq:finite_horizon_control}
is a fixed-horizon sign calculation and should not be described as gambler's
ruin; the gambler's-ruin formula is the threshold-exit probability of
Eq. \eqref{eq:threshold_exit}, which is linear in $\operatorname{arctanh}U_0$
rather than a normal distribution function.

\begin{figure}[htbp]
\centering
\journalfigure[width=\textwidth]{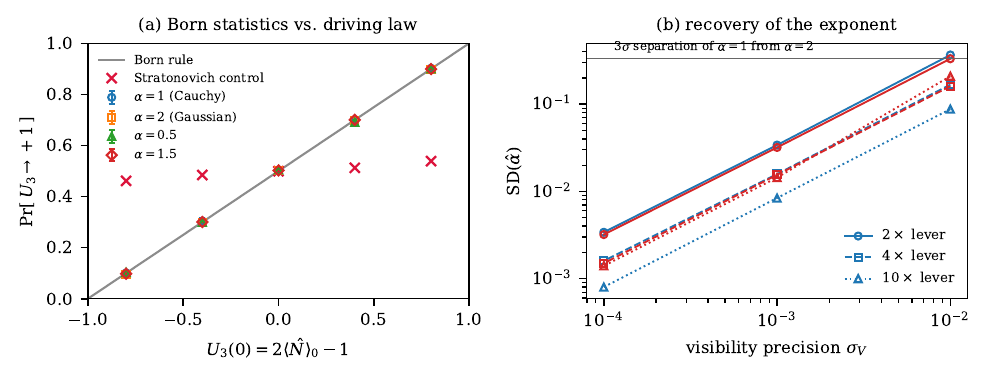}
\caption{(a) Branch probability against initial occupation for four bounded
symmetric proposal laws, with $3\sigma$ Monte Carlo error bars. All four
collapse onto the Born line $\half(1+U_0)$. Crosses show the exact-flow
Gaussian control at step $4000$, which is nearly flat at $\half$.
(b) Standard deviation of the fitted exponent $\widehat\alpha$ against
absolute visibility noise $\sigma_V$, for splitting sets spanning factors of
$2$, $4$, and $10$; blue denotes $\alpha=1$ and red $\alpha=2$. The horizontal
line marks the precision at which $\alpha=1$ and $\alpha=2$ separate at three
standard deviations.}
\label{fig:born}
\end{figure}

Mean absorption times differ across the proposal laws, being $253$, $281$,
$263$, and $421$ steps at $\alpha=1.5,1,2,0.5$ respectively for the symmetric
initial condition. This ordering is not a statement about heavy tails.
Truncating a heavier-tailed law at unit scale concentrates more of its mass
near the origin and so reduces the effective step variance, and the ordering
tracks that variance rather than the tail index. Absorption time carries no
invariant meaning here and is not used.

Two separate conclusions follow. Within the bounded non-anticipating walk, the
Born probability does not select a stable parent index, consistent with
Corollary~\ref{cor:alpha_free}. Under the exact-flow Gaussian convention the
state does not asymptotically reduce, and any imposed finite threshold obeys
the non-Born exit law of Proposition~\ref{prop:convention}.

\emph{Subordination check.}\label{sec:results_subord}
For the $1/2$-stable clock parameterization of
Section~\ref{sec:methods_numerics}, the empirical characteristic function from
$5\times10^5$ samples was $0.83698$, $0.70175$, $0.49260$, $0.24274$, and
$0.05981$ at Fourier arguments $0.25,0.5,1,2,4$. The corresponding Cauchy
values are $0.83797$, $0.70219$, $0.49307$, $0.24312$, and $0.05911$. The
largest absolute difference, $9.9\times10^{-4}$, is below the quoted Monte
Carlo standard error of $1.4\times10^{-3}$. A linear fit of the logarithm of
the empirical characteristic function against $\lvert\xi\rvert$ gives
$D=0.7038$, compared with the analytic value $\sigma/\sqrt2=0.7071$, a relative
difference of $0.5\%$. The reported metric is thus a direct fit of
$\log\E[e^{i\xi X}]$ against $\lvert\xi\rvert^{\alpha}$.
It checks the
characteristic-function identity used in Eq. \eqref{eq:subordination_cf} and is
not a Kolmogorov--Smirnov tail test.

\subsection{Estimating the exponent from rate scaling}
\label{sec:results_precision}

Under the phenomenological energy assignment $q_a=E_a/\Estar$, a two-level
coherence with splitting $\Delta E$ has anomalous visibility
\begin{equation}
  V(T)=V_0\exp[-\Lambda_\alpha\lvert\Delta E\rvert^{\alpha}T].
  \label{eq:visibility_law}
\end{equation}
If the anomalous rate can be isolated at two splittings, its unknown overall
coefficient cancels
\begin{equation}
  R_k=\frac{\Gamma(k\Delta E)}{\Gamma(\Delta E)}=k^{\alpha}.
  \label{eq:scaling_ratio}
\end{equation}
At $k=2$, a splitting-independent phenomenology predicts $R_2=1$, the Cauchy
hypothesis predicts $R_2=2$, and the Gaussian hypothesis predicts $R_2=4$.
Equation \eqref{eq:scaling_ratio} is the cleanest scale-free discriminator the
framework offers, but it applies to the isolated anomalous component and not
automatically to the total laboratory decoherence rate.

Table~\ref{tab:precision} reports the standard deviation of the fitted exponent
under the protocol of Section~\ref{sec:methods_numerics}. With two points a
factor of two apart and one-percent absolute visibility noise, the standard
deviation of $\widehat\alpha$ is about $0.32$--$0.35$, so $\alpha=1$ and
$\alpha=2$ separate by only about three standard deviations in this idealized
model. Three points spanning a factor of four reduce the uncertainty to
$0.08$--$0.13$, and four points spanning a factor of ten reduce it to about
$0.05$--$0.06$. At smaller readout noise the uncertainty scales nearly linearly
with $\sigma_V$. Lever arm is therefore worth as much as raw visibility
precision, and often more. Figure~\ref{fig:born}(b) shows the scaling of the
estimator's standard deviation with both.

\begin{table}[htbp]
\centering
\caption{Standard deviation of the constrained nonlinear least-squares estimate
$\widehat\alpha$ from $4000$ synthetic raw-visibility data sets per
configuration. Splittings are given in units of the smallest splitting, whose
true visibility is $0.95$. The calculation assumes known $V_0$, independent
additive Gaussian readout noise, and no background-decoherence nuisance
parameters.}
\label{tab:precision}
\vspace{1mm}
\small
\begin{tabular}{llccc}
\toprule
& & \multicolumn{3}{c}{absolute visibility noise $\sigma_V$}\\
\cmidrule(lr){3-5}
Splittings & True $\alpha$ & $10^{-2}$ & $10^{-3}$ & $10^{-4}$\\
\midrule
$\{1,2\}$       & 1 & 0.350 & 0.0337 & 0.00340\\
                 & 2 & 0.322 & 0.0309 & 0.00305\\
$\{1,2,4\}$     & 1 & 0.125 & 0.0125 & 0.00124\\
                 & 2 & 0.0827 & 0.00832 & 0.000829\\
$\{1,2,5,10\}$  & 1 & 0.0509 & 0.00513 & 0.000512\\
                 & 2 & 0.0632 & 0.00621 & 0.000636\\
\bottomrule
\end{tabular}
\end{table}

These numbers should not be read as a claim that percent-level fringe control
alone suffices in a real experiment. Ordinary environmental dephasing generally
depends on the level splitting as well, and fitting an additional scaling
exponent together with magnetic, electric, thermal, and readout nuisance terms
can substantially weaken identifiability. A credible test should tune one
transition within one apparatus, at fixed temperature and with common state
preparation and readout, so that the environmental contribution is as nearly
common-mode as possible. Zeeman or Stark tuning of a single transition is
preferable to comparing distinct transitions. The environmental scalings should
be characterized independently rather than assumed, since technical dephasing
from magnetic-field noise and from photon scattering carries its own known
dependence on splitting, and it is the residual after subtraction that
Eq. \eqref{eq:scaling_ratio} constrains. The interaction time should be varied
as well as the splitting, and residuals should be tested for curvature in
$\log\Gamma$ against $\log\lvert\Delta E\rvert$. Under the energy assignment, a
residual mass or spatial-separation dependence at fixed $\Delta E$ would show
that energy alone is not the relevant charge. Suitable platforms include
optical-lattice and single-ion clocks, Talbot--Lau matter-wave interferometry
with internal-state control \cite{fein2019}, and superconducting or spin qubits
with tunable splittings.

\subsection{Illustrative sensitivity benchmark}
\label{sec:results_sensitivity}

The following calculation is a hypothetical benchmark from stated
representative inputs, not a fit to published spectroscopic data. A real bound
requires the noise model of a specific experiment. Suppose a coherence with
$\Delta E=1.6\,\mathrm{eV}=2.56\times10^{-19}\,\mathrm J$ survives for
$T=10\,\mathrm s$ with any anomalous visibility loss bounded by one percent.
Then,
\begin{equation}
  \Lambda_\alpha
  \le \frac{-\log(0.99)}{\lvert\Delta E\rvert^{\alpha}T}
  =1.005\times10^{-3}\lvert\Delta E\rvert^{-\alpha}\,\mathrm s^{-1}.
  \label{eq:illustrative_bound}
\end{equation}
For $\alpha=2$ this gives
$\Lambda_2\le1.53\times10^{34}\,\mathrm{J}^{-2}\mathrm{s}^{-1}$, and for
$\alpha=1$ it gives
$\Lambda_1\le3.92\times10^{15}\,\mathrm{J}^{-1}\mathrm{s}^{-1}$.

Under the small-step identification of Section~\ref{sec:discussion_milburn},
the first value corresponds to an effective step
\begin{equation}
  \tau_{\mathrm M}^{\mathrm{eff}}
  =2\h^2\Lambda_2
  \le3.40\times10^{-34}\,\mathrm s,
  \label{eq:milburn_benchmark}
\end{equation}
approximately $6.3\times10^9$ Planck times. This is a unit conversion inside
the diffusive approximation, not an empirical bound on Planck-scale physics.
Note that the two coefficients are constrained on different scales and carry
different units, so a measurement at a single splitting can bound
$\Lambda_\alpha$ once $\alpha$ has been chosen but cannot estimate $\alpha$
itself. Only the scaling test of Eq. \eqref{eq:scaling_ratio}, with multiple
splittings, can do that, which is why Section~\ref{sec:results_precision} is the
operative part of the experimental program.

\section{Discussion}
\label{sec:discussion}

\subsection{Relation of the quadratic endpoint to established models}
\label{sec:discussion_milburn}

For the energy assignment $Q=H/\Estar$, the Markovian master equation at
$\alpha=2$ is
\begin{equation}
  \frac{\dd\rho}{\dd u}
  =-\frac{i}{\h}[H,\rho]
   -\frac{D}{\Estar^2}[H,[H,\rho]],
  \label{eq:milburn}
\end{equation}
since $[H,[H,\rho]]_{ab}=(E_a-E_b)^2\rho_{ab}$ in the energy eigenbasis. This
is a Gaussian energy-dephasing equation with the same double-commutator
structure that appears in several intrinsic-time and energy-decoherence models
\cite{adler2004,diosi2005}, but equality of master equations does not imply
equality of the microscopic constructions behind them.

Milburn's exact intrinsic-decoherence equation is the Poissonian random-unitary
jump model
\begin{equation}
  \frac{\dd\rho}{\dd u}
  =\frac1{\tau_{\mathrm M}}
   \left(e^{-iH\tau_{\mathrm M}/\h}\rho
   e^{iH\tau_{\mathrm M}/\h}-\rho\right).
  \label{eq:milburn_exact}
\end{equation}
For $\lVert H\rVert\tau_{\mathrm M}/\h\ll1$, expansion gives
\begin{equation}
  \frac{\dd\rho}{\dd u}
  =-\frac{i}{\h}[H,\rho]
   -\frac{\tau_{\mathrm M}}{2\h^2}[H,[H,\rho]]
   +O(\tau_{\mathrm M}^2),
  \label{eq:milburn_expansion}
\end{equation}
so Eq. \eqref{eq:milburn} matches Milburn's leading dissipative term when
$D/\Estar^2=\tau_{\mathrm M}/(2\h^2)$. At finite step size, however, the exact
coherence magnitude decays at rate
\begin{equation}
  \Gamma_{ab}^{\mathrm{Milburn}}
  =\tau_{\mathrm M}^{-1}
    \left[1-\cos\!\left(\frac{(E_a-E_b)\tau_{\mathrm M}}{\h}\right)\right],
  \label{eq:milburn_exact_rate}
\end{equation}
which is bounded and oscillatory and is not globally proportional to
$(E_a-E_b)^2$. The stable family should therefore be described as a scale-free
extension of the quadratic energy-dephasing generator, not as an exact
extension of Milburn's finite-step model. Equation \eqref{eq:milburn_exact_rate}
also shows why Theorem~\ref{thm:gaussian_nogo} must be stated narrowly:
Eq. \eqref{eq:milburn_exact} is itself a non-Gaussian phase model whose charge
dependence is not quadratic, so it is not the case that every
intrinsic-decoherence proposal in the literature is Gaussian.

Theorem~\ref{thm:gaussian_nogo} says only that when the accumulated phase is a
linear functional of a Gaussian field, the charge dependence is quadratic
irrespective of the covariance spectrum. Obtaining a strictly subquadratic
stable exponent then requires a non-Gaussian phase law, as in
Proposition~\ref{prop:poisson_stable}, or a random-clock construction such as
Eq. \eqref{eq:subordination_cf}. This is the single new physical hypothesis in
the framework, and its single new observable consequence is a decay rate
growing more slowly than quadratically in the charge or energy difference.

\subsection{Comparison with position-localization models}
\label{sec:discussion_csl}

The channel of Eq. \eqref{eq:lifted_channel} is pure dephasing in the
eigenbasis of $Q$. It is not a position-localization model and does not by
itself select an outcome. Continuous spontaneous localization instead modifies
trajectories in position space and typically predicts mass and separation
dependence as well as energy injection
\cite{ghirardi1990,bassi2013,carlesso2022}. Under $[H,Q]=0$ the present
dephasing channel conserves the energy expectation exactly, and at fixed
$\Delta E$ it predicts no dependence on particle mass or on the spatial extent
of the superposition unless the charge assignment introduces one. An experiment
sensitive to mass scaling therefore discriminates between the two
phenomenologies independently of the exponent.

Two qualifications are needed. First, as Eq. \eqref{eq:selection_energy} shows,
the exact energy conservation belongs to the dephasing channel, not to the
diffusive selection model of Eq. \eqref{eq:collapse_sse}, which heats whenever
$[H,N_a]\ne0$. A heating bound therefore constrains the selection sector and
not the dephasing sector. Second, a laboratory can host ordinary environmental
dephasing alongside any collapse-like channel, so a null bound on one mechanism
is not automatically a bound on the other.

\subsection{Model dependeces}
\label{sec:discussion_status}

The mathematically established core consists of the scale-free characterization
on the lifted line, the wrapped convolution semigroup for lattice-spaced
charges, the random-unitary channel, the universal complete-positivity boundary
with its three-charge obstruction, the quadratic charge dependence of Gaussian
integrated phases, the bounded-walk Born theorem, the Gaussian
It\^o/Stratonovich distinction together with the continuous-time statement of
section \ref{rem:marcus}, and the Wiener ensemble average of
Eq. \eqref{eq:collapse_lindblad}. The numerical work checks implementations and
illustrates estimator precision.
It is not used to prove any of these results.

Several statements are conditional. The inverse-power Poisson mechanism is
exact only for an ideal homogeneous point process with a singular kernel,
symmetric marks, and the convergence assumptions of
Proposition~\ref{prop:poisson_stable}.
Physical core sizes and finite volumes
regularize the model and generally leave only approximate stable scaling. The
subordination construction is exact as a stochastic process but becomes a
physical model only once a laboratory clock variable is identified. The energy
assignment $q_a=E_a/\Estar$ is phenomenological, since the channel theorem
itself allows any self-adjoint charge.

No selection dynamics is derived here. The bounded walk is a separate nonlinear
model whose Born probabilities follow from its martingale structure.
Equation \eqref{eq:collapse_sse} is a second, diffusive selection model. It
shows how one specific Gaussian unraveling yields a linear Lindblad ensemble,
but it neither proves that the stable dephasing noise selects outcomes nor
excludes carefully constructed non-Gaussian jump unravelings. There is
consequently no demonstrated single-noise theory unifying dephasing and
objective selection here, and the framework constrains a selection rule without
supplying one.

The precision calculation likewise has limited status. It quantifies
identifiability under an ideal two-parameter visibility model. A full
experimental proposal would require a platform-specific likelihood, a
calibration model, background channels, correlations, and a preregistered rule
for choosing the splitting range. Table~\ref{tab:precision} should be read as
demonstrating the value of a broad lever arm, not as a guaranteed significance
for an existing device.

\subsection{A bi-temporal geometry and its obstructions}
\label{sec:discussion_geometry}

It is natural to ask whether any physical structure supplies an unobserved
compact phase graded by a charge. A two-dimensional temporal plane provides a
formal compact angle, which in its
present form does not work well. Take $\R^{3,2}$ with
$x^M=(x,y,z,c\tau_1,c\tau_2)$ and $\eta_{MN}=\diag(-1,-1,-1,+1,+1)$. With
$\tau_1=t\cos\theta$ and $\tau_2=t\sin\theta$,
\begin{equation}
  \dd s^2=-\dd\mathbf x^2+c^2\dd t^2+c^2t^2\dd\theta^2.
  \label{eq:bitemporal_metric}
\end{equation}
The orbits at fixed $t$ are compact and timelike, so they are closed timelike
curves of proper duration $2\pi t$. This follows directly from
$g_{\theta\theta}>0$ and compactness and is a geometric obstruction, not a
coordinate artifact.

Appendix~\ref{app:geometry} derives the polar Dirac operator. In a rotating
spin frame, a spinor inherited from the unpunctured Cartesian plane is
antiperiodic around the circle. On the punctured plane there are two spin
structures, corresponding to periodic or antiperiodic rotating-frame spinors,
in the absence of an additional gauge holonomy. A gauge Wilson line can supply
a continuous twist, so the boundary phase is discrete only under that
proviso. The metric of Eq. \eqref{eq:bitemporal_metric} has no fixed
compactification radius, since its instantaneous circumference is $2\pi ct$ and
grows without bound. Replacing it by a constant $R$ is therefore a separate
frozen-radius diagnostic, not an adiabatic consequence of the geometry.

That diagnostic gives the formal dispersion relation
\begin{equation}
  E_n^2=\lvert\mathbf p\rvert^2c^2+m^2c^4
        -\left(\frac{\h c(n+\chi)}{R}\right)^2,
  \qquad \chi\in\left\{0,\half\right\},
  \label{eq:timelike_dispersion}
\end{equation}
the internal contribution entering negatively precisely because the compact
direction is timelike. A given mode is non-tachyonic only if
\begin{equation}
  R\ge\frac{\h\lvert n+\chi\rvert}{mc}.
  \label{eq:R_mode_bound}
\end{equation}
For $\chi=1/2$ the lowest mode requires $R\ge\h/(2mc)$, approximately
$1.9\times10^{-13}\,\mathrm m$ for an electron, which is excluded by many
orders of magnitude. More seriously, for every finite $R$ there are arbitrarily
large $\lvert n\rvert$ violating Eq. \eqref{eq:R_mode_bound}, so the entire
nonzero-mode tower is unstable on both branches unless an additional
ultraviolet truncation or gauge constraint is supplied. No relation between $R$
and the decoherence rate $D$ follows from this calculation.

The model has a second obstruction. Evolution on a $\tau_1=\mathrm{const}$
slice is not unitary in the naive positive $L^2(\dd^3x\,\dd\tau_2)$ norm,
because the second-time kinetic term is anti-Hermitian in the resulting
Hamiltonian. Ultrahyperbolic equations can have a well-posed codimension-one
initial-value problem on a nonlocally constrained set of data
\cite{craig2009,wang2022}, but no derivation identifies that constrained space
with a positive physical Hilbert space for the present spinor model.

The third obstruction is created by the geometry itself and is the most
consequential. Restricting to the $\theta$-independent sector, which exists
only for the periodic spin structure $\chi=0$, avoids propagation around the
closed timelike curve and removes the unstable tower. But it also removes all
nonzero internal charge differences and makes Eq. \eqref{eq:master_rate}
identically zero. Assigning instead $q_a=E_a/\Estar$ restores the phenomenology
only as an independent postulate on the real phase lift, and generic real
energy differences do not descend to the circle at all. The radial variable $t$
also fails to define a spacelike foliation of ordinary $\R^{3,1}$, so it does
not address the Lorentz-covariance objection to nonlocal reduction, which
concerns slicings of Minkowski space. Bars' two-time physics avoids analogous
ghosts through a local $Sp(2,\R)$ gauge symmetry that removes the extra
dimensions with no Kaluza--Klein remnant
\cite{bars2001,barsdeliduman2001,barskounnas1997}.
The construction considered
here uses a global $U(1)$ and has no corresponding gauge mechanism, and
conversely a model in which the extra time is entirely gauge-removable supplies
no residual stochastic phase and hence none of the phenomenology derived above.

For these reasons the bi-temporal construction is a source of geometric
intuition, not a consistent relativistic quantum field theory and not a
derivation of the dephasing law. Every theorem and every numerical result in
Sections~\ref{sec:methods_channel}--\ref{sec:results_sensitivity} is
independent of it, and rejecting the bi-temporal model loses none of the results. A viable
successor would need a gauge or constraint mechanism that removes the unstable
tower, a derived charge assignment, and a demonstrated relation between the
geometric radius and the stochastic rate.

\subsection{Falsification criteria}
\label{sec:discussion_falsification}

The framework admits several logically distinct tests.
First, an exact global rate law
$\Gamma(\Delta q)=D\lvert\Delta q\rvert^{\alpha}$ with $\alpha\notin(0,2]$
would falsify the symmetric scale-free L\'evy phase model, and for $\alpha>2$
it would also violate the universal complete-positivity condition of
Theorem~\ref{thm:cp_boundary}. A local effective slope above two over a finite
interval is a weaker observation and requires testing for crossover or
curvature before that conclusion can be drawn; in particular, a two-level
measurement at a single pair of splittings cannot on its own exhibit the
obstruction.

Second, reproducible curvature in $\log\Gamma$ against
$\log\lvert\Delta q\rvert$ would falsify exact scale covariance, though it
could remain compatible with a more general phase process. Anomalous population
transfer generated by the proposed channel would contradict unitality and
population preservation. Energy nonconservation in the dephasing sector would
falsify the combination of pure dephasing with $[H,Q]=0$, but not a model built
on a different charge; an anomalous rate tracking mass or superposition size
rather than energy splitting would refute the energy assignment independently
of $\alpha$.

Third, within the bounded reduction walk, a statistically significant departure
from Eq. \eqref{eq:born_walk} under verified symmetric, state-preserving,
nondegenerate proposals would refute the stated update or its implementation.
For Gaussian exact-flow dynamics, observation of genuine asymptotic absorption
without an additional term would contradict
Proposition~\ref{prop:convention}. Any demonstration of superluminal signaling
in a concrete realization would refute the construction through ensemble
linearity.

Finally, the geometric motivation can be eliminated by any requirement that a
physical nonzero tower coexist with positive, unitary, causal Hilbert-space
dynamics under the assumptions stated here. The present construction already
fails that requirement.

\section{Conclusion}
\label{sec:conclusion}

A symmetric L\'evy lift of an unobserved phase, coupled linearly to a charge,
generates the random-unitary dephasing law
$\Gamma_{ab}=D\lvert q_a-q_b\rvert^{\alpha}$. Requiring continuous scale
covariance of the L\'evy exponent fixes the stable family and restricts
$\alpha$ to $(0,2]$. The wrapped circle process is the appropriate compact
description only for lattice-spaced charge differences; arbitrary real charges
require the lift. Complete positivity over arbitrary finite charge spectra has
the same exact boundary, with three equally spaced charges providing a direct
analytic obstruction above two.

Gaussian integrated phases are necessarily quadratic in charge, though colored
Gaussian noise need not be Markovian. Ideal inverse-power Poisson fields and
subordinated Brownian motion give two mathematically explicit routes to
subquadratic stable exponents, subject to the physical qualifications stated in
the text.
The Poisson route is constrained rather than free, since the stable
window $0<\alpha<2$ coincides exactly with the convergence condition $p>d/2$.
The quadratic endpoint is the double-commutator energy-dephasing law and
matches the small-step expansion of Milburn's exact model, not its finite-step
dynamics.

For the separate bounded reduction walk, Born probabilities follow from a
bounded-martingale theorem together with a conditional-variance argument, and
the simulations confirm that they do not depend on the parent stable index.
Born-rule recovery therefore does not select the Cauchy case and places no
constraint on the exponent, which is consequently a quantity to be measured
rather than derived. In continuous time, neither the exact-flow Marcus reading
nor a naive It\^o jump equation reproduces this result. The first oscillates for
every $\alpha$, and the second is not state preserving. In the Gaussian
continuum limit the It\^o dynamics reduces with Born probabilities, whereas the
exact-flow/Stratonovich dynamics becomes a recurrent Brownian coordinate and
does not collapse without an imposed threshold. The Born rule in this model
class is thus convention dependent, and the two conventions give grossly
different and numerically distinguishable predictions.

The exponent is operationally accessible through ratios of anomalous rates at
different charge or energy splittings, a discriminator independent of the
unknown rate constant. The idealized calculation reported here shows that lever
arm matters as much as raw visibility precision, while also making clear that
realistic environmental nuisance channels must be modeled before any claim of
significance. The bi-temporal geometry does not supply the missing
microphysics. It has closed timelike curves, nonunitary naive slice evolution,
an unstable nonzero-mode tower, and no derived nonzero charge assignment. The
defensible claim is therefore the dephasing and martingale analysis, which
stands independently of any interpretation of the phase, together with the
measurement program that follows from it.

\newpage

\section*{Acknowledgments}
Partial support was provided by NIH and NSF funding.

\section*{Author contributions}
M.V.V. and I.D.D. developed the idea and wrote the manuscript.

\section*{Conflict of interest statement}
The authors declare no conflicts of interest.

\section*{Data and code availability}
This article does not report the collection or use of observational data. The
supplementary computational notebook contains the fixed random seeds and code
for all numerical checks in Section~\ref{sec:results}. The exponent-recovery
calculation uses the raw-visibility nonlinear fit specified in
Eq. \eqref{eq:precision_model}, and the archived notebook implements that
protocol. The benchmark in Section~\ref{sec:results_sensitivity} is arithmetic
from the stated hypothetical inputs and is not a fit to published data.

\newpage

\appendix
\setcounter{equation}{0}
\renewcommand{\theequation}{\thesection.\arabic{equation}}

APPENDIX

\section{Stable processes on the lift and on the circle}
\label{app:levy}

\subsection{Proof of Theorem~\ref{thm:wrapped_stable}}

For fixed $u>0$, the sequence $k\mapsto e^{-Du\lvert k\rvert^{\alpha}}$ is the
restriction to $\Z$ of the characteristic function of a symmetric
$\alpha$-stable random variable on $\R$, and is therefore positive definite on
$\Z$. Bochner's theorem for the compact abelian group $\Circ$ gives a unique
probability measure $\mu_u$ whose Fourier coefficients are
\begin{equation}
  \widehat\mu_u(k)=e^{-Du\lvert k\rvert^{\alpha}}.
  \label{eq:mu_fourier}
\end{equation}
The identity $\widehat\mu_{u+v}(k)=\widehat\mu_u(k)\widehat\mu_v(k)$ implies
$\mu_{u+v}=\mu_u*\mu_v$, so $(\mu_u)$ is a convolution semigroup. Since the
Fourier coefficients tend to one as $u\downarrow0$, the corresponding
convolution operators form a strongly continuous Markov semigroup.

For $u>0$ the Fourier coefficients are absolutely summable, so $\mu_u$ has the
continuous density of Eq. \eqref{eq:transition_density}. On the Fourier basis
$e_k(\theta)=e^{ik\theta}$,
\begin{equation}
  \lim_{u\downarrow0}\frac{T_ue_k-e_k}{u}
  =-D\lvert k\rvert^{\alpha}e_k,
  \label{eq:generator_modes}
\end{equation}
which identifies the generator as the closure of
$-D(-\Delta_\theta)^{\alpha/2}$ on its natural domain.

If a probability measure $\nu$ is invariant, then
$\widehat\nu(k)=e^{-Du\lvert k\rvert^{\alpha}}\widehat\nu(k)$ for every $u>0$,
so $\widehat\nu(k)=0$ for $k\ne0$ and $\widehat\nu(0)=1$, whence $\nu$ is Haar
measure. The nonzero Fourier modes decay exponentially in $u$, giving $L^2$ and
weak convergence. General background on stable processes and on L\'evy
processes on Lie groups is given in \cite{sato1999,liao2004}.

\subsection{Poisson-kernel identity}

At $\alpha=1$, with $r_u=e^{-Du}$,
\begin{align}
  p_u(\theta)
  &=\frac1{2\pi}\left(1+2\sum_{k\ge1}r_u^k\cos(k\theta)\right)\\
  &=\frac1{2\pi}\frac{1-r_u^2}{1-2r_u\cos\theta+r_u^2},
\end{align}
by the geometric-series identity. This is Eq. \eqref{eq:poisson_kernel}, the
wrapped Cauchy density of mean resultant length $r_u$ \cite{mardia2000}.

\subsection{Finite-dimensional distributions of the wrapped process}

Let $0=u_0<u_1<\cdots<u_N$ and $k_1,\ldots,k_N\in\Z$. Writing each
$\Theta_{u_j}$ as a sum of independent wrapped increments gives
\begin{equation}
  \E\!\left[\exp\!\left(i\sum_{j=1}^Nk_j\Theta_{u_j}\right)\right]
  =\prod_{j=1}^N
   \exp\!\left[-D\left\lvert\sum_{\ell=j}^Nk_\ell\right\rvert^{\alpha}
   (u_j-u_{j-1})\right].
  \label{eq:fdd}
\end{equation}
This family is consistent under deletion of time points and satisfies
Chapman--Kolmogorov, so it defines the wrapped L\'evy process. The restriction
of Eq. \eqref{eq:fdd} to integer arguments is precisely why arbitrary real
charge differences must be evaluated on the lift, and why the scale covariance
of Eq. \eqref{eq:scale_covariance} cannot be imposed on the circle.

\subsection{Drift and a static boundary twist}
\label{app:drift}

A deterministic drift $b$ changes the lifted characteristic function to
\begin{equation}
  \E[e^{i\xi(L_u-L_0)}]
  =\underbrace{e^{-Du\lvert\xi\rvert^{\alpha}}}_{\text{spreading}}
   \cdot\underbrace{e^{ib\xi u}}_{\text{rotation}} .
  \label{eq:drift_cf}
\end{equation}
The modulus of the coherence factor is unchanged, so drift produces a coherent
phase rotation rather than decoherence. A static boundary condition
$\Psi(\theta+2\pi)=e^{2\pi i\chi}\Psi(\theta)$ is a different object: it labels
an affine charge lattice $q_n=n+\chi$, and a common $\chi$ cancels from charge
differences. Neither affects $\Gamma_{ab}$, which depends only on charge
differences. Additional gauge holonomy can supply a continuous twist even
though the spin structures themselves are discrete.

\section{Complete positivity and the analytic obstruction above two}
\label{app:cp}

For a finite charge decomposition, Eq. \eqref{eq:dephasing_blocks} is the Schur
multiplier $\rho_{ab}\mapsto C^{(u)}_{ab}\rho_{ab}$. A Schur multiplier is
completely positive if and only if its multiplier matrix is positive
semidefinite, and its unit diagonal then makes the map trace preserving and
unital. The entrywise identity $C^{(u)}C^{(v)}=C^{(u+v)}$ gives the semigroup
property.

For $0<\alpha\le2$, the kernel $\lvert x-y\rvert^{\alpha}$ is conditionally
negative definite on $\R$, this being the L\'evy--Khintchine property of the
symmetric stable exponent \cite[Ch.~3]{sato1999}, and Schoenberg's theorem
\cite{schoenberg1938} makes its negative exponential positive definite for
every $u\ge0$. The argument uses nothing about the charges being integers or
even discrete, which is what licenses the energy assignment
$q_a=E_a/\Estar$; it is the appropriate complete-positivity argument for the
real-line lift.

For completeness the three-point obstruction is displayed directly. With
\begin{equation}
  q=(0,1,2),\qquad v=(1,-2,1),
  \label{eq:three_point_vectors}
\end{equation}
one has $\sum_av_a=0$ and
\begin{equation}
  v^{\mathsf T}Kv
  =\sum_{a,b}v_av_b\lvert q_a-q_b\rvert^{\alpha}
  =2(2^{\alpha}-4),
  \qquad K_{ab}=\lvert q_a-q_b\rvert^{\alpha}.
  \label{eq:three_point_form}
\end{equation}
For $\alpha>2$ this is positive, whereas a conditionally negative-definite
kernel requires it to be nonpositive. Since
$C^{(u)}=\mathbf1\mathbf1^{\mathsf T}-DuK+O(u^2)$ entrywise and
$v^{\mathsf T}\mathbf1\mathbf1^{\mathsf T}v=0$, the quadratic form of
$C^{(u)}$ is negative for sufficiently small $u$. This establishes failure with
no eigenvalue scan. Three charges are also the minimum: for a two-element
charge set the multiplier is
$\bigl(\begin{smallmatrix}1&c\\c&1\end{smallmatrix}\bigr)$ with
$c=e^{-Du\lvert\Delta q\rvert^{\alpha}}\in(0,1]$, which is positive
semidefinite for every exponent.

At $\alpha=2$ the Gaussian kernel is strictly positive definite for distinct
real charges and $u>0$. A numerically vanishing smallest eigenvalue for a large
and broad Gaussian Gram matrix reflects conditioning and finite precision, not
an exact zero eigenvalue.

\section{The bounded walk and Gaussian convention dependence}
\label{app:convention}

\subsection{Bloch-coordinate reduction}

Equation \eqref{eq:bloch_reduction} preserves $U_2^2+U_3^2$ because
\begin{equation}
  \frac{\dd}{\dd u}(U_2^2+U_3^2)
  =2U_2\dot U_2+2U_3\dot U_3=0 .
  \label{eq:bloch_invariant}
\end{equation}
On the unit circle, write $U_3=\cos\psi$ and choose the sign of $U_2=\sin\psi$
continuously on a branch; the deterministic equation is then separable away
from the fixed points. With
\begin{equation}
  z=\operatorname{arctanh}U_3=-\log\tan(\psi/2),
  \label{eq:logit_coordinate}
\end{equation}
if $w=\log\tan(\psi/2)$ obeys $w\mapsto w-\Delta X$, then the same exact flow
reads $z\mapsto z+\Delta X$.

\subsection{It\^o diffusion}

For Eq. \eqref{eq:ito_U}, $U_u$ is a bounded martingale. To identify its limit,
set $z=\operatorname{arctanh}U$; It\^o's formula gives
\begin{equation}
  \dd z=\sigma\dd W_u+\sigma^2\tanh z\,\dd u .
  \label{eq:ito_z}
\end{equation}
A scale function for this diffusion is $s(z)=\tanh z$, since
$s'(z)=\exp\bigl(-\int 2\sigma^2\tanh z\,\dd z/\sigma^2\bigr)
=\operatorname{sech}^2z$. On the finite interval $(-L,L)$ the probability of
exiting at $+L$ is therefore
\begin{equation}
  \Pr_z(\tau_{+L}<\tau_{-L})
  =\frac{s(z)+s(L)}{2s(L)}
  =\frac{U+\tanh L}{2\tanh L},
  \label{eq:ito_exit_finite}
\end{equation}
which tends to $(1+U)/2$ as $L\to\infty$. Equivalently, bounded-martingale
convergence gives an almost-sure limit, and a limit in the interior would leave
the quadratic-variation rate $\sigma^2(1-U^2)^2$ bounded away from zero,
contradicting convergence. Hence the limit is $\pm1$, with the stated Born
probability.

\subsection{Exact-flow and Stratonovich diffusion}

For Eq. \eqref{eq:strat_U} the ordinary chain rule applies, giving
\begin{equation}
  \dd z=\sigma\dd W_u,
  \qquad z=\operatorname{arctanh}U .
  \label{eq:strat_z}
\end{equation}
Brownian motion on $\R$ is recurrent and has no limit as $u\to\infty$, so
$U_u=\tanh(z_0+\sigma W_u)$ does not converge to either attractor. Converting
to It\^o form gives
\begin{equation}
  \dd U_u
  =\sigma(1-U_u^2)\dd W_u
   -\sigma^2U_u(1-U_u^2)\dd u,
  \label{eq:strat_ito_form}
\end{equation}
a drift toward the equator, so $U_u$ is a strict supermartingale for $U_u>0$
and a submartingale for $U_u<0$. On $(-L,L)$ the Brownian exit probability is
\begin{equation}
  \Pr_z(\tau_{+L}<\tau_{-L})
  =\frac{z+L}{2L}
  =\half+\frac{\operatorname{arctanh}U}{2L},
  \label{eq:strat_exit}
\end{equation}
which proves Eq. \eqref{eq:threshold_exit}. It is linear in
$\operatorname{arctanh}U_0$ rather than in $U_0$, and collapses to $\half$ as
the cutoff $L$ is increased.

For a discrete exact-flow walk with centered increments and accumulated
standard deviation $s_n$, the fixed-horizon probability $\Pr(U_n>0)$ is a
distribution-function calculation. It equals
$\Phi(\operatorname{arctanh}U_0/s_n)$ only when the accumulated increment is
Gaussian, or accurately approximated by one, and it should not be called an
absorption or gambler's-ruin probability. The two formulas agree only to first
order in $\operatorname{arctanh}U_0/s_n$.

\subsection{Continuous-time stable drivers}

Section \ref{rem:marcus} uses two standard facts. First, in the Marcus canonical
equation the noise acts through the exact deterministic flow of the vector
field $\sigma(1-U^2)\partial_U$, whose flow in the coordinate
$z=\operatorname{arctanh}U$ is translation; hence $U_u=\tanh(z_0+L_u)$
exactly. Second, a nondegenerate symmetric L\'evy process on $\R$ oscillates,
so that $\limsup_{u\to\infty}L_u=+\infty$ and
$\liminf_{u\to\infty}L_u=-\infty$ almost surely; symmetry excludes the drifting
alternatives of the classical trichotomy. Since $\tanh$ is a homeomorphism of
$\R$ onto $(-1,1)$, $U_u$ then oscillates between the two attractors without
converging, for every $\alpha\in(0,2]$.

For the naive It\^o jump equation $\dd U_u=\sigma(1-U_{u-}^2)\dd L_u$, a jump
of size $\Delta L$ produces $\Delta U=\sigma(1-U_{u-}^2)\Delta L$, which exits
$[-1,1]$ once $\lvert\Delta L\rvert$ is large enough; since a symmetric stable
L\'evy measure has unbounded support, this occurs with positive probability
from any interior state. For $\alpha\le1$ the driver is moreover not
integrable, so $\E[\Delta U\mid\mathcal F_{u-}]$ is undefined. The factor
$(1-U_{u-}^2)$ does not repair this, being $\mathcal F_{u-}$-measurable.

\section{Diffusive ensemble linearity and no-signaling assumptions}
\label{app:nosignaling}

\subsection{It\^o average of the trajectory equation}

Let $\overline N_a=\langle\Psi\vert N_a\vert\Psi\rangle$ and
$\Pi=\lvert\Psi\rangle\langle\Psi\rvert$. Applying It\^o's product rule to
Eq. \eqref{eq:collapse_sse} gives
\begin{equation}
  \dd\Pi
  =(\dd\lvert\Psi\rangle)\langle\Psi\rvert
   +\lvert\Psi\rangle(\dd\langle\Psi\rvert)
   +(\dd\lvert\Psi\rangle)(\dd\langle\Psi\rvert).
  \label{eq:ito_product}
\end{equation}
For each $a$, the drift and quadratic-variation terms combine as
\begin{align}
 &-\half\gamma_a^2
  \left[(N_a-\overline N_a)^2\Pi
       +\Pi(N_a-\overline N_a)^2\right]
  +\gamma_a^2(N_a-\overline N_a)\Pi(N_a-\overline N_a)\\
 &\hspace{25mm}
  =-\half\gamma_a^2[N_a,[N_a,\Pi]].
  \label{eq:ito_cancellation}
\end{align}
The scalar $\overline N_a$ cancels identically, because the double commutator
is invariant under $N_a\mapsto N_a-\overline N_a$. Terms linear in $\dd W^a$
have zero conditional expectation. Taking expectations therefore yields
Eq. \eqref{eq:collapse_lindblad}, in Lindblad form with self-adjoint
$L_a=\gamma_aN_a$. The It\^o counterterm also makes
$\dd\langle\Psi\vert\Psi\rangle=0$. This cancellation is what makes the
ensemble linear although each trajectory is not, and it is the step that
requires the Wiener quadratic variation.

A jump-driven normalized trajectory equation can also have a linear ensemble,
but its drift then contains the compensator associated with its own jump
measure. Such a model cannot be obtained by replacing $\dd W$ in
Eq. \eqref{eq:collapse_sse} while retaining the Wiener counterterm.

\subsection{Partial trace for local and additive dephasing}

Let $\hilb=\hilb_A\otimes\hilb_B$ and suppose
\begin{equation}
  H=H_A\otimes\id+\id\otimes H_B(\lambda),
  \label{eq:bipartite_H}
\end{equation}
where $\lambda$ is a local setting controlled by party $B$. A local
trace-preserving generator on $B$ disappears under $\Tr_B$. For an additive
self-adjoint dephasing operator
\begin{equation}
  L=A\otimes\id+\id\otimes B,
  \label{eq:additive_L}
\end{equation}
the double commutator expands into $A$--$A$, $B$--$B$, and cross terms. Partial
cyclicity gives
\begin{equation}
  \Tr_B([\id\otimes B,X])=0
  \label{eq:partial_cyclicity}
\end{equation}
for every trace-class $X$. Consequently the $B$--$B$ term vanishes after
$\Tr_B$, and so does the cross term, since
$\Tr_B([A\otimes\id,[\id\otimes B,\rho]])
=[A,\Tr_B([\id\otimes B,\rho])]=0$. Meanwhile
\begin{equation}
  \Tr_B([A\otimes\id,[A\otimes\id,\rho]])
  =[A,[A,\rho_A]],
  \qquad \rho_A=\Tr_B\rho .
  \label{eq:AA_term}
\end{equation}
Thus the reduced equation for $A$ contains only $H_A$, the $A$ parts of the
additive dephasing operators, and any explicitly local $A$ channels, and is
independent of $\lambda$. Iterating the Dyson series makes $\rho_A(u)$
independent of $\lambda$ for all $u$. The same argument applies with $A$ and
$B$ exchanged. It does not cover a genuinely nonlocal interaction Hamiltonian
or a product jump operator such as $A\otimes B$.

\section{Characteristic functions and subordination}
\label{app:influence}

\subsection{Cumulant form when moments exist}

If $X(u)$ has cumulants $\kappa_r(X(u))$ and the cumulant series converges in a
neighborhood containing the relevant charge difference, then
\begin{equation}
  \E[e^{i(q_a-q_b)X(u)}]
  =\exp\!\left[
    \sum_{r\ge1}\frac{i^r}{r!}(q_a-q_b)^r\kappa_r(X(u))
  \right].
  \label{eq:cumulant_expansion}
\end{equation}
For a centered Gaussian variable only $\kappa_2$ is nonzero, which recovers
Eq. \eqref{eq:gaussian_cf}; note, however, that the proof of
Theorem~\ref{thm:gaussian_nogo} does not rely on
Eq. \eqref{eq:cumulant_expansion} and requires no convergence hypothesis. A
non-Gaussian stable variable with $\alpha<2$ has no finite second moment, and
for $\alpha\le1$ no finite first absolute moment; its log-characteristic
function $-Du\lvert\xi\rvert^{\alpha}$ is non-analytic at the origin and
Eq. \eqref{eq:cumulant_expansion} does not converge term by term. The correct
statement in that case is the direct one,
$\E[e^{i\delta\phi_{ab}}]=e^{-Du\lvert q_a-q_b\rvert^{\alpha}}$.

\subsection{Proof of Proposition~\ref{prop:poisson_stable}}

For a finite annulus $r_0<\lvert x\rvert<R$, the Poisson exponential formula
and symmetry of the marks give
\begin{align}
 \log\E[e^{i\xi Y_{r_0,R}}]
 &=n\int_{r_0<\lvert x\rvert<R}
   \left(\E[e^{i\xi A\lvert x\rvert^{-p}}]-1\right)\dd x\\
 &=n\lvert\mathbb S^{d-1}\rvert
   \int_{r_0}^{R}
   \left(\E[\cos(\xi A r^{-p})]-1\right)r^{d-1}\dd r.
 \label{eq:poisson_log_cf}
\end{align}
As $r\to0$ the parenthetical factor is bounded by two and $r^{d-1}$ is
integrable, so no small-$r$ cutoff is required. As $r\to\infty$ the factor
behaves as $-\half\xi^2\E[A^2]r^{-2p}$ when the second mark moment is finite,
and the radial integrand as $r^{d-1-2p}$, which is integrable exactly when
$p>d/2$. For the nonnegative integrand $1-\cos(\cdot)$, Tonelli's theorem and
the assumption $\E\lvert A\rvert^{d/p}<\infty$ allow the radial and mark
integrals to be interchanged. The substitution $s=\lvert\xi A\rvert r^{-p}$
yields
\begin{align}
 &n\lvert\mathbb S^{d-1}\rvert
 \int_0^\infty
 \E[1-\cos(\xi A r^{-p})]r^{d-1}\dd r\\
 &\quad=\frac{n\lvert\mathbb S^{d-1}\rvert}{p}
   \E\lvert\xi A\rvert^{d/p}
   \int_0^\infty(1-\cos s)s^{-1-d/p}\dd s\\
 &\quad=C\lvert\xi\rvert^{d/p}.
\end{align}
The last integral converges at $s\to0$ because $1-\cos s\simeq s^2/2$, which
requires $d/p<2$, and at $s\to\infty$ because $d/p>0$; it is therefore finite
exactly for $0<d/p<2$. Sending $r_0\downarrow0$ and $R\uparrow\infty$ gives a
characteristic-function limit continuous at the origin, and L\'evy's continuity
theorem yields convergence in distribution together with
Eq. \eqref{eq:alpha_dp}. Ordinary moments of $Y$ can diverge even though this
characteristic-function limit exists. Conversely, imposing a nonzero hard core
changes the large-$\lvert Y\rvert$ behavior and destroys exact stability, and
taking $p\le d/2$ makes the centered annular sums square integrable so that the
central limit theorem returns a Gaussian limit.

\subsection{Subordination identity}

Conditioning on the clock $S_u$ gives
\begin{align}
  \E[e^{i\xi B_{S_u}}]
  &=\E\!\left[\E[e^{i\xi B_{S_u}}\mid S_u]\right]\\
  &=\E[e^{-\kappa\xi^2S_u}]\\
  &=e^{-cu(\kappa\xi^2)^{\beta}}
   =e^{-c\kappa^{\beta}u\lvert\xi\rvert^{2\beta}},
\end{align}
which proves Eq. \eqref{eq:subordination_cf}. This is the standard Bochner
subordination construction \cite{bochner1955}. The index of the resulting
symmetric stable phase is $\alpha=2\beta$, so the Cauchy case requires a
$1/2$-stable clock and the general case an $\alpha/2$-stable clock.

\section{Bi-temporal geometry: formal calculations and obstructions}
\label{app:geometry}

\subsection{Polar form of the Dirac operator}

Use gamma matrices satisfying $\{\gamma^M,\gamma^N\}=2\eta^{MN}\id$ for
$\eta=\diag(-1,-1,-1,+1,+1)$, and let
\begin{equation}
  \gamma^{\widehat t}(\theta)
  =\cos\theta\,\gamma^4+\sin\theta\,\gamma^5,
  \qquad
  \gamma^{\widehat\theta}(\theta)
  =-\sin\theta\,\gamma^4+\cos\theta\,\gamma^5.
  \label{eq:rotating_gammas}
\end{equation}
In polar coordinates on the temporal plane, the Cartesian Dirac equation is
\begin{equation}
 \left[
  i\h c\gamma^i\partial_i
  +i\h\gamma^{\widehat t}\partial_t
  +\frac{i\h}{t}\gamma^{\widehat\theta}\partial_\theta
  -mc^2
 \right]\Psi=0.
 \label{eq:polar_dirac_unrotated}
\end{equation}
Set $\Sigma=\gamma^4\gamma^5$, so that $\Sigma^2=-\id$, and
$S(\theta)=e^{-\theta\Sigma/2}=\cos(\theta/2)\id-\sin(\theta/2)\Sigma$. Then
$\gamma^{\widehat t}=S\gamma^4S^{-1}$ and
$\gamma^{\widehat\theta}=S\gamma^5S^{-1}$. With the rotating-frame spinor
$\Psi_{\mathrm{rot}}=S^{-1}\Psi$, and using
$S^{-1}\partial_\theta S=-\Sigma/2$ together with $\gamma^5\Sigma=-\gamma^4$,
Eq. \eqref{eq:polar_dirac_unrotated} becomes
\begin{equation}
 \left[
  i\h c\gamma^i\partial_i
  +i\h\gamma^4\left(\partial_t+\frac1{2t}\right)
  +\frac{i\h}{t}\gamma^5\partial_\theta
  -mc^2
 \right]\Psi_{\mathrm{rot}}=0.
 \label{eq:polar_dirac_rotated}
\end{equation}
The $1/(2t)$ term is the polar spin connection. The rescaling
$\wt\Psi=t^{1/2}\Psi_{\mathrm{rot}}$, which flattens the measure to
$\dd^3x\,\dd\theta$, removes it and gives
\begin{equation}
 \left[
  i\h c\gamma^i\partial_i
  +i\h\gamma^4\partial_t
  +\frac{i\h}{t}\gamma^5\partial_\theta
  -mc^2
 \right]\wt\Psi=0.
 \label{eq:kime_dirac}
\end{equation}
One may therefore display the spin connection explicitly as in
Eq. \eqref{eq:polar_dirac_rotated}, or absorb it by the density rescaling as in
Eq. \eqref{eq:kime_dirac}. Retaining both, or pairing the rotated spinor with
the $\theta$-dependent gamma matrices of Eq. \eqref{eq:rotating_gammas}, would
double count the connection.

\subsection{Spin structures and boundary conditions}

Because $S(2\pi)=e^{-\pi\Sigma}=-\id$, a Cartesian spinor that is single-valued
on the unpunctured temporal plane becomes antiperiodic in the rotating frame:
\begin{equation}
  \Psi_{\mathrm{rot}}(\theta+2\pi)=-\Psi_{\mathrm{rot}}(\theta).
  \label{eq:rotating_antiperiodic}
\end{equation}
On the punctured plane $\R^2\setminus\{0\}\simeq\Circ\times(0,\infty)$ and
$H^1(\Circ;\Z_2)=\Z_2$, so there are exactly two spin structures. In the
rotating frame they are represented by
\begin{equation}
  \wt\Psi(\theta+2\pi)=e^{2\pi i\chi}\wt\Psi(\theta),
  \qquad \chi\in\left\{0,\half\right\}.
  \label{eq:spin_boundary}
\end{equation}
The antiperiodic choice is the bounding structure inherited from the full
plane; the periodic choice is available only after the origin is excised.
Equation \eqref{eq:spin_boundary} classifies spin structures in the absence of
an additional flat gauge connection. A gauge Wilson line can add a continuous
holonomy, and that continuous parameter should not be conflated with the
discrete spin-structure choice. A common $\chi$ in any case cancels from
$\Gamma_{ab}$, which depends only on charge differences.

\subsection{Frozen-radius spectrum and tower instability}

The metric coefficient in Eq. \eqref{eq:bitemporal_metric} is time dependent.
To diagnose the sign of the angular contribution, freeze it at a chosen $t_0$
and write $R=ct_0$. For a mode $e^{i(n+\chi)\theta}$, squaring the
constant-coefficient diagnostic operator gives
Eq. \eqref{eq:timelike_dispersion}, and the effective four-dimensional mass
squared is
\begin{equation}
  M_n^2=m^2-\left(\frac{\h(n+\chi)}{Rc}\right)^2 .
  \label{eq:effective_mass}
\end{equation}
Each fixed mode therefore satisfies the non-tachyonic condition
Eq. \eqref{eq:R_mode_bound}, but no finite $R$ stabilizes the entire tower. For
$\chi=0$ the $n=0$ mode is unaffected while every sufficiently high nonzero
mode is unstable; for $\chi=1/2$ even the lowest-magnitude mode is shifted, and
the resulting bound $R\ge\h/(2mc)$ excludes that branch for Standard Model
fermions. Because the instability persists for arbitrarily large $\lvert n\rvert$
at any finite $R$, it cannot be evaded by choosing the spin structure. This
obstruction is independent of the decoherence parameter $D$, and no relation
between $R$ and $D$ follows from the calculation.

\subsection{Nonunitarity of a naive one-time slice}

Write the Cartesian equation as evolution in $\tau_1$,
\begin{equation}
  i\h\partial_{\tau_1}\Psi=\mathcal H\Psi,
  \label{eq:slice_evolution}
\end{equation}
with
\begin{equation}
  \mathcal H
  =-i\h c\gamma^4\gamma^i\partial_i
   -i\h c\gamma^4\gamma^5\partial_{(c\tau_2)}
   +mc^2\gamma^4 .
  \label{eq:slice_hamiltonian}
\end{equation}
Choose $\gamma^{4,5}$ Hermitian and $\gamma^{1,2,3}$ anti-Hermitian. Then
$(\gamma^4\gamma^i)^\dagger=\gamma^{i\dagger}\gamma^4=-\gamma^i\gamma^4
=\gamma^4\gamma^i$ is Hermitian, whereas
$(\gamma^4\gamma^5)^\dagger=\gamma^5\gamma^4=-\gamma^4\gamma^5$ is
anti-Hermitian. Since derivatives are anti-Hermitian under integration by
parts, the spatial kinetic and mass terms of Eq. \eqref{eq:slice_hamiltonian}
are Hermitian in the naive $L^2(\dd^3x\,\dd\tau_2)$ inner product while the
second-time kinetic term is anti-Hermitian. Hence
\begin{equation}
  \frac{\dd}{\dd\tau_1}\lVert\Psi\rVert^2
  =\frac{i}{\h}\langle\Psi\vert
   (\mathcal H^\dagger-\mathcal H)\vert\Psi\rangle
  \label{eq:norm_nonconservation}
\end{equation}
is nonzero in general. A constrained ultrahyperbolic initial-value theory may
use a different physical state space \cite{craig2009,wang2022}, but that
construction has not been carried out here.

\subsection{Zero-mode restriction and the charge problem}

On the periodic branch $\chi=0$, let $P_0$ denote projection onto the
$\theta$-independent angular sector. This is strictly stronger than projection
onto total angular charge zero: a charge-compensated pair of modes with charges
$+n$ and $-n$ is neutral but does not lie in $\Ran P_0$. The filter
$\cF_u$ of Eq. \eqref{eq:mode_filter} approaches $P_0$ in the normalized limit,
so the sharp projector arises as a limit of completely positive,
trace-non-increasing maps rather than as an external constraint; but no
interacting dynamics is shown to preserve that subspace, since a generic
interaction does not commute with the excitation number. On the antiperiodic
branch there is no $\theta$-independent spinor sector at all.

If only $\Ran P_0$ is retained, all angular charge differences vanish and the
compact-phase dephasing rate is identically zero, so the model becomes
empirically indistinguishable from ordinary quantum mechanics. The alternative
assignment $q_a=E_a/\Estar$ is not an angular tower charge and, for generic
real energy differences, does not descend to the circle; it is a
phenomenological charge for the real-line phase model. No equation in this work
derives that assignment or relates the geometric radius $R$ to the stochastic
rate $D$.

\newpage
\providecommand{\noopsort}[1]{}\providecommand{\singleletter}[1]{#1}%


\end{document}